  \providecommand\BibTeX{{%
    \normalfont B\kern-0.5em{\scshape i\kern-0.25em b}\kern-0.8em\TeX}}}
\def\expandafter\UrlBreaks\expandafter{\UrlBreaks
  \do\a\do\b\do\c\do\d\do\e\do\f\do\g\do\h\do\i\do\j%
  \do\k\do\l\do\m\do\n\do\o\do\p\do\q\do\r\do\s\do\t%
  \do\u\do\v\do\w\do\x\do\y\do\z\do\A\do\B\do\C\do\D%
  \do\E\do\F\do\G\do\H\do\I\do\J\do\K\do\L\do\M\do\N%
  \do\O\do\P\do\Q\do\R\do\S\do\T\do\U\do\V\do\W\do\X%
  \do\Y\do\Z}
\newcommand{\prob}{\mathrm{Pr}}
\newcommand{\expected}[2]{\underset{#2}{\mathbb{E}}\Large[#1\Large]}
\newtheorem{lemma}{Lemma}
\newtheorem{theorem}{Theorem}
\newcommand{\femalevalue}{Equal price - Female valuable}
\newcommand{\femaleprice}{Expensive female - Equal value}
\newtheorem{definition}{Definition}
\newcommand{\E}{\mathbb{E}}
\newcommand{\msf}{\mathsf}
\newcommand{\m}{n_{\msf{m}}}
\newcommand{\w}{n_\msf{w}}
\renewcommand{\citeyearpar}{\cite}
\begin{document}
\title{Bidding Strategies with Gender Nondiscrimination Constraints for Online Ad Auctions}


\author{Milad Nasr}
\affiliation{%
  \institution{University of Massachusetts Amherst}
}
\email{milad@cs.umass.edu}
\author{Michael Carl Tschantz}
\affiliation{%
  \institution{International Computer Science Institute}
}
\email{mct@icsi.berkeley.edu}

\begin{abstract}

  Interactions between bids to show ads online can lead to an advertiser's ad being shown to more men than women even when the advertiser does not target towards men.  We design bidding strategies that advertisers can use to avoid such emergent discrimination without having to modify the auction mechanism.  We mathematically analyze the strategies to determine the additional cost to the advertiser for avoiding discrimination, proving our strategies to be optimal in some settings.  We use simulations to understand other settings.
  
  \end{abstract}

\maketitle

\section{Introduction}

Prior work found Google showing an ad for the Barrett Group,
a career coaching service promoting the seeking of high paying jobs,
more often to simulated men than women~\cite{datta15pets}.
Later work enumerates possible causes of this
disparity~\cite{datta18fatstar}.

One possibility, raised by Google itself~\cite{todd15pgh},
is that the Barrett Group targeted both men and women
equally, but other advertisers, on average, focused more on women, 
which would be in line with subsequent findings~\cite{lambrecht18ssrn}.
In this possibility, the Barrett Group found itself outbid for just
women by the
other advertisers who were willing to pay more than it was for reaching
women but not for men.
These other advertisers might be promoting products that many
find acceptable to target toward women, such as makeup.
Thus, it's possible that each advertiser's
targeting appears reasonable in isolation but interacts to bring about
emergent discrimination for a job-related ad.

For conscientious advertisers of products that should be broadcasted to
women and men at equal rates, such an outcome is unacceptable but
currently difficult to avoid.
While Google
offers the ability to skew ads toward men or toward women, it provides no
way to ensure that both men and women see the ad an equal number of times.
As discussed above, simply not targeting by gender is not enough
to guarantee parity.
Even running two ad campaigns of equal size is insufficient since the size
is determined by budget and not the number of ads shown, which means
that parity would only be achieved if women and men are equally
expensive to reach.

In this work, we consider how advertisers can ensure approximate demographic
parity for its ads without changing Google's ad auction mechanism,
which is based on a second-price auction~\cite{google18adsense}.
Given that an advertiser wishes to maximize its utility by reaching the
people most likely to respond to its ads, we model the advertiser's
utility function along with the parity goal as a constrained bidding
problem.
We consider both a very strict absolute parity constraint and a more
relaxed relative constraint inspired by the US EEOC's four-fifths rule
on disparate impact~\citep{eeoc1978}.
While using a second-price auction suggests that the advertisers
should bid their true value of showing an ad, a parity constraint
and multiple rounds of the auction interact to make deviations from
this truthful strategy optimal.
Intuitively, as in multi-round second-price auctions with budget
constraints~\cite{gummadi2011optimal}, it is sometimes better to bid less to
preserve the ability to participate in later auctions with a
lower cost of winning.
More interestingly, unlike with just budget constraints, it is also
sometimes better to bid more to ensure
an acceptable degree of parity, enabling participation in other
auctions later.

Given these complexities, finding an optimal bidding strategy for such
a constrained bidding problem is non-trivial.
We do so by modeling them as a Markov Decision Problems (MDPs).
Solving these MDPs using traditional methods, such as value iteration,
is made difficult by the continuous space of possible bid values over
which to optimize.
To avoid this issue, we find recursive formulae for each type of
constraint providing the optimal bid value and solve for their values
instead.
This approach allows us to solve the MDPs without needing to explicitly
maximize over the possible actions as in value iteration.

We compare this optimal constrained bidding strategy to the optimal
unconstrained strategy for both real and simulated data sets.
The cost to the advertiser for ensuring parity varies by setting, but
is manageable under the more realistic settings
explored.
In all cases, the revenue of the simulated Google remains roughly the
same or goes up.

By not modifying the core auction algorithm used by
Google and instead suggesting bidding strategies that could be deployed by
the advertisers, we believe this work provides a practical path towards
nondiscriminatory advertising.

\section{Related Work}


The most closely related work, recently 
looked at enforcing parity constraints with auction mechanisms, whereas
we do so with bidding strategies~\cite{celis19arxiv}.
While both approaches have their use cases, we believe ours is easier
to deploy since just the advertisers wanting the feature need to make
changes to implement it.
We further discuss tradeoffs between deployment approaches in
Section~\ref{sec:discussion}.
Our approach also differs by using strict constraints whereas theirs
uses probabilistic constraints.
Probabilistic constraints allow more utility but may be insufficient
in cases where approximate parity is required, as when disparate
impact is prohibited.
At an algorithmic level, they differ by using gradient decent.

A similar alternative approach could use auction mechanisms with Guaranteed ad
Delivery (GD)~\cite{salomatin2012unified,turner2012planning}.
An advertiser can act as two parties to the auction, one for each
gender, and use GD to ensure an equal number of wins for each party.
Unlike our bidding strategy, which an advertiser can
unilaterally employ, this approach requires
the ad exchange to change its auction mechanisms.

Prior works have looked at how to enforce (proportional) parity
constraints on the classifications produced by ML
algorithms~\cite{calders09icdmw,calders10dmkd,zemel13icml,kamishima12euromlkdd}.
We instead look at auctions.

Prior works have used MDPs to model ad slot auctions.
Li~et~al.~\citeyearpar{li2010value} and
Iyer~et~al.~\citeyearpar{iyer2011mean} have used them to find optimal
bidding strategies when advertisers do not know the exact values of
each type of ad slot and learn values by winning them.
They showed advertisers should overbid to learn more information.
Gummadi~et~al.~\citeyearpar{gummadi2011optimal} described the optimal
bidding strategy for the second-price auction in which each advertiser
has a limited budget, which leads to underbidding.
Zhang~et~al.~\citeyearpar{zhang2014optimal} derived optimal real-time
bidding strategies when each ad slot have different properties.

\section{Online Ad Auctions}

When a person visits a webpage, the webpage will often contain
dynamically loaded ads at fixed locations on the page.
These ads each occupy an ad \emph{slot}, a location at a time (or page
load) on the webpage.
In some cases, the website selects which ads to show in which slots
itself, such as with Facebook.
In other cases, the website contracts with a third-party,
to fill and charge for the slots in exchange for payments to the
website.
In either case, 
We call the entity choosing how to fill the slots an
\emph{ad exchange}.
For example, Google runs an ad exchange, Google Ad Manager, which
includes slots put up for sale by websites with its AdSense tool.

%

Typically, an ad exchange auctions off the slots it controls to
advertisers.
It can use \emph{real-time bidding} to auction off the slots
as the webpage loads.
The website and the ad exchange can offer advertisers various amounts
of information about the slot, such as the webpage it is on and
demographics about who is loading the page.
Advertisers performing \emph{programmatic advertising} use a dynamic
bidding strategy that adjusts their bids according to how well they
expect their ads to perform in the offered slot.
To avoid having to create programs for executing such strategies on
their own, advertisers often use a \emph{demand-side platform} (DSP).
Figure~\ref{fig:auction_scenario} demonstrates a sketch of the interactions. 
\begin{figure}
\centering
\includegraphics[width=\columnwidth]{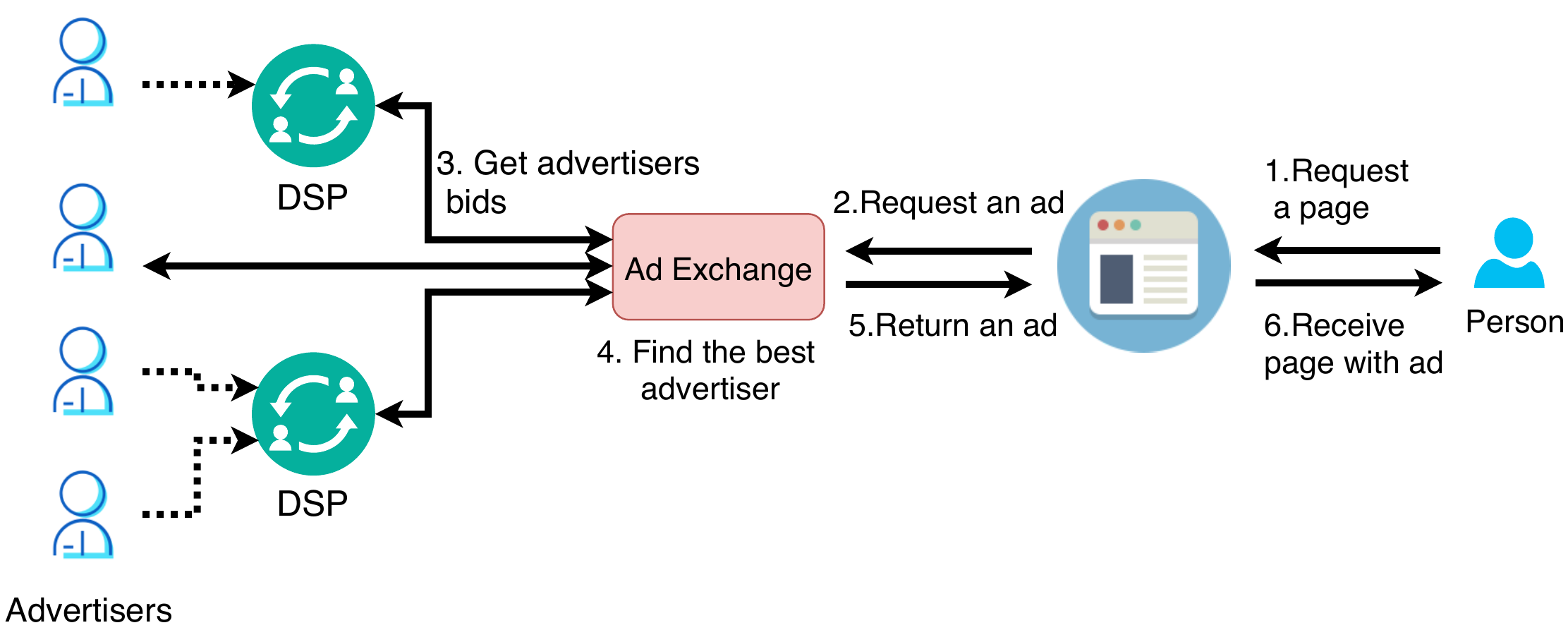}
\caption{Online advertisement interactions}
\label{fig:auction_scenario}
\end{figure}

An ad exchange may accept bids that are more complex than just a single
price, such as including an offer to pay a bonus if the website
visitor clicks the ad~\cite{google18adsense}.
Exchanges wishing to maximize the amount of bonuses it receives, or
to avoid annoying visitors, might consider the quality of the ad and
it's fit for the slot.
For simplicity, we will not consider these complications and instead
presume that all bids are simply offers to pay for showing the ad.

Second price auctions is a common mechanism for resolving such
auctions, with Google using a variation of one~\cite{google18adsense},
and we will presume the ad exchange uses one.
In this auction mechanism, the exchange selects the highest bidder as
the winner but only charges the bidder the price offered by the second
highest bidder.
Under certain circumstances,
this mechanism ensures that each bidder's optimal strategy is to bid
the actual amount it values the slot at, making the mechanism
\emph{truthful}.
Since ad exchanges sometimes sell more than one slot at time, such as
for a webpage with multiple slots, they often use generalized
second-price auctions, known as position
auctions~\cite{varian07ijio,edelman07aer}.

We model the above economy as a sequential game of incomplete information, where in each round of the game a set of self-interested rational advertisers bids to win an ad
slot through a second-price auction. 
We allow bids to vary over auctions and assume that each advertiser has a geometric lifespan.
For simplicity, we make the total number of advertisers $\alpha$ equal in all auctions by assuming that every time an advertiser dies a new advertiser joins.

At time $t$, each advertiser $i$ submits a bid $b_i^t$.
Let $b_{-i}^t$ be the bids of other the advertisers.
The ad exchange platform runs a second-price auction where
Advertiser $i$ wins the ad slot if its bid is higher than all other bids: $b_i^t > \max b_{-i}^t$.
For simplicity, we assume no ties, ensuring that such a winner exists.
Let $a_i^t$ be $1$ if the advertiser $i$ wins at round $t$ and be $0$ otherwise.
If the advertiser $i$ wins it will pay the second highest bid $d_i^t = \max b_{-i}^t$.
The cost of the auction $t$ is $c_i^t = a_i^t * d_i^t$ since the
advertiser $i$ only pays if it wins.

The ad slot auctioned at $t$ has a value $v_i^t$ for the advertiser $i$.
When an advertiser $i$ wins auction $t$, it gets an immediate reward, which is the value $v_i^t$ less its price $d^t_i$.
Thus, the utility of advertiser $i$ gained at each round is
$u_i^t = a_i^t*v_i^t - c_i^t = a_i^t(v_i^t - d_i^t)$.
Let the geometric parameter for the lifespan distribution for advertiser $i$ be $\delta_i$. 
The total utility for each advertiser is
$U_i = \sum_{t=0}^\infty \delta_i^{t} * a_i^t (v_i^t - d_i^t)$ 
where $\delta_i^{t}$ is exponentiation, not indexing like the others.



The advertiser $i$ should select its bids $b_i^t$ to maximize the expected value of $U_i$ where the expectation is over its value $v_i^t$
and the bids of other advertisers $b_{-i}^t$.
The advertiser can use market research, its prior experiences, and any
information provided by the ad exchange to estimate these uncertain
values.
In the case of a pure second-price auction, the values of
the other bids $b_{-i}^t$ are irrelevant and the optimal strategy is
to always set its bid $b_i^t$ equal to its estimation of its value
$v_i^t$.

However, this result does not carry over to all second-price auctions
with constraints, including the parity constraints we consider.
In this case, the behavior the other advertisers matters, but
estimating it for individual ad slots is difficult.
Furthermore, the advertiser is unlikely to estimate the value of every
ad slot individually even for a pure second-price auction.
Rather, the advertiser will likely model ad slots as each having a
\emph{type} belonging to a set $\Theta$ of reasonable size.
The types will represent the most important information to the
advertiser about the slot.
For simplicity, we will typically assume that $\Theta$ is equal to $\Gamma$.

For each type $\theta$, the advertiser will estimate the expected
value $v_i^\theta$ of a slot of type $\theta$.
For estimating the other bids, prior research~\cite{iyer2011mean} has
shown it reasonable to model them as coming a stationary fixed
distribution due to the large number of other advertisers.
To simplify the future analysis, we denote the CDF of other bids for a
slot of type $\theta$ by $g_i^\theta$.
Finally, let $p_i^\theta$ be the probability that the advertiser assigns to
type $\theta$.

With these estimations, we compute estimations of other key
quantities.
The probability of winning on auction $t$ for a slot of type $\theta$ with a bid of $x$ is
$q(x; g_i^{\theta}) = \prob (b_{-i}^t \leq x) = g_i^{\theta}(x)^{(\alpha-1)}$
where $\alpha$ is the number of advertisers at each ad slot auction.

The expected value of the utility for the advertiser $i$ for a single auction given the distribution of the other advertisers' bid $g_i^{\msf{w}}$ and $g_i^{\msf{m}}$ is
$\expected{u_i^t}{} 
= \sum_\theta p_i^\theta \times q(v_i^t; g_i^{\theta}) \times (v_i^t- d_i^t)$.
The expect value of the total utility for each advertiser is
\begin{align}
\expected{U_i}{}
&= \sum_t \delta_i^{t-1} \sum_\theta p_i^\theta  q(v_i^t; g_i^{\theta}) ( v_i^t- d_i^t)
\end{align}

\section{Parity Constraints}

Advertisers may have concerns in addition to attempting to maximize
the utility $U_i$, such as complying with laws and social norms.
In some cases, this will include ensuring that its ads reach various
protected groups to the same degree.
For example, an employer may desire that a job ad be shown to an equal
number of women and men to comply with laws prohibiting gender
discrimination in hiring~\cite{datta18fatstar}.
Such advertisers would like to place their bids in a manner to ensure
such demographic parity.

However, 
the above auction mechanism, as well Google's actual
mechanism as far as we can tell, does not offer any way of ensuring
that a job ad is shown to an approximately equal number of women and men, 
as required by laws prohibiting gender
discrimination in hiring~\cite{datta18fatstar}.
Furthermore, ad exchanges may be unwilling to support such
constraints given that only some advertisers have such concerns.
Thus, our goal is to provide advertisers with a bidding strategy that
dynamically adjusts bids to preserve the gender parity of the viewers,
which advertisers can unilaterally use without needing changes
to the auction mechanism of the ad exchange.

As an additional benefit of not modifying the ad auction mechanism,
our bidding strategy can be used for any type of auction.
However, we design and analyze them with only with second-price
auctions in mind.

To state our goal more precisely, we have to distinguish between
absolute (additive) and ratio (relative) parity.
An advertiser has \emph{$K$-strict absolute parity}, or
\emph{$K$-parity} for short, if after
each auction, the maximum difference between the number of auctions
that it wins for each gender is not more than $K$.
An advertiser has $R$-ratio parity, after each auction, if the maximum ratio of the number of auctions that it wins for each gender is not more than $R$.

Our goal is to find the optimal bidding strategy for advertisers
obeying either type of constraint.
This task is difficult since a constrained advertiser must consider not just
the immediate reward of winning a slot, but also how it may close or
open the possibility of winning additional slots later.
To see this, we will consider three examples involving a simplified
setting in which an advertiser $i$ is subject to $1$-parity and
knows exactly how long it will live.
In each example, it values men and women both at
$20$ (no variance), but that other advertisers value women at
an expected value of $21$ and men at an expected value of $5$.
This setting reflects that advertisers are willing to pay more, on
average, for women than men~\cite{lambrecht18ssrn}.

In the first example we consider, the advertiser knows that it will
live for exactly one ad auction.
In this case, the advertiser $i$ will bid the value of the immediate
reward $20$ that it receives for winning an auction regardless of
whether it is subject to a $1$-parity constraint since winning
the auction has no effect other than that immediate reward.
It will win an auction for a man and lose an auction for a woman.

Next, consider the advertiser's behavior for a series of two auctions.
The interesting case is two men in a row.
In this case, advertiser $i$ can only win one of the slots since it is
subject to a $1$-parity policy.
Thus, the utility of the advertiser will be smaller from having
$1$-parity, but it need not be half that of when it is unrestricted.
If the number of women is small enough ($p \ll 0.5$), the advertiser 
can assume it will get two men in a row and can lower the value of its bid
on the first man in hopes of winning at a discount, given the fluctuations
in the other advertisers' bids.
We call this \emph{underbidding}, although we emphasize that it is
underbidding with respect to its immediate reward, not with respect to
what is overall rational.
Underbidding effectively allows the advertiser $i$ to skip the first
auction if the variance in the other advertisers' bids produces an abnormally high
competing bid.
This is similar to how underbidding is optimal in some repeated
second-price auctions with a constrained budget~\cite{gummadi2011optimal}.
The degree of underbidding must balance the chance at getting a male
slot at a discount
with the risk of either losing both auctions
or getting a female slot for the second auction.

The opposite, \emph{overbidding}, can also occur.
To see this, consider a series of three auctions with a woman followed
by two men.
In this case, the advertiser $i$ can win both men, despite the
$1$-parity constraint, provided that it first wins the woman.
Thus, winning the woman produces not just an immediate reward, but also
a future reward by unlocking the ability to win more men.
If we presume negligible variance in the other bids, the advertiser $i$
will have to bid $22$ to win the woman and pay the second price of $21$, yielding an immediate reward of $-1$ by
bidding $1$ over the inherent value $20$ of the female ad slot to the advertiser.
However, since the immediate reward of a male slot is $20-5 = 15$, being
able to win the second man means a net positive gain of $15 - 1$.
(We ignore the effects of underbidding since we are now considering
negligible variance in the other advertiser's bids, which makes the
effect go away.)

We find this distinction between the immediate reward and the
future rewards coming from future flexibility useful for determining
the optimal bidding strategy.
However, doing so requires not only making the above intuitions
quantitative, but also dealing with additional probabilistic factors,
such as the genders of ad slots not being known in advance and the
uncertain duration of the auction sequence.
To overcome these difficulties, we switch to a more systematic model
for each type of constraint.

\section{Absolute Parity Constraints}

An advertiser want to show an ad to equal numbers of men and women.
A particularly careful advertiser may desire that this parity
constraint holds not only at the end of ad campaign but throughout.
Such continuous parity ensures that the advertiser would pass an audit
checking for this property at any point in time.
It also ensures meeting the parity goal if the the ad campaign must be
cut short or if a sudden influx of competing advertisers prevents
winning addition slots.

Meeting this strict goal is impossible since the first ad must go to
either a man or woman, and not both.
To account for this, we relax this goal by allowing a difference to
arise.
We use $K$ to denote the maximum allowed difference where $K=1$ is the
strictest constant compatible with showing any ads.


To make this precise, 
we let $\Gamma$ denote a set of groups.
We are typically interested in the case where
$\Gamma = \{\msf{m},\msf{w}\}$ with $\msf{m}$ denoting men and
$\msf{w}$ women.
In this case, we use $p$ to denote the probability of a male ad slot
(i.e., $p_i^{\msf{m}}$).
We use $n_i$ to denote the number ad slots for people in group $i$ won
by the constrained advertiser.

\begin{definition}[$K$-parity]
\label{def:strict}
An advertiser obeys a \emph{$K$-absolute parity} constraint, 
or \emph{$K$-parity} for short,
for a set of groups $\Gamma$ 
iff, after each auction, for all groups $i$ and $j$ in $\Gamma$,
the number of auctions that it wins satisfies 
$n_{i} - n_{j} \leq K$.
\end{definition}

We study approximating the optimal bidding stagy that an advertiser
desiring to meet a $K$-parity constraint can use to do so.
In our analysis, we assume all of the advertisers have an unlimited
budget.
Thus, they can bid on all auctions in its lifespan, unless maintaining
$K$-parity constraint precludes it.


\subsection{Modeling} \label{sec:model}

To find the optimal bidding strategy for the $K$-parity advertiser, we model the problem as a Markov Decision Problem (MDP).
The obvious state space for such an MDP would have states of the form 
$\langle \m,\w,\theta \rangle$,
where $\m$ and $\w$ is the current number male and female viewers,
respectively, and $\theta$ is the type of the ad slot currently being
auctioned off, which we presume corresponds to a gender.
($\theta$ could be generalized to allow targeting toward certain men
and women.)
Observing that only $\m - \w$ matters,
we instead use a smaller space of $|\Theta|\times(2K+1)$ states. 
We denote each state by a tuple $\langle k, \theta\rangle$, where $k$ is the difference between male and female viewers.
When the advertiser wins an ad slot for a male viewer, the advertiser goes from state $k$ to $k+1$; for a female, it goes from $k$ to $k-1$.
The value of the $\theta$ is decided by a random process depending upon the value of $p$, where $p$ is probability of the viewer being male.

To find the optimal solution, we write the Bellman equation for the MDP in the steady state. Since we consider the steady state regime we also replace the value of each ad slot by its expected value (i.e., $v_i^\theta$).
The value function for each state except for two states $\langle K, \msf{m}\rangle$ and $\langle -K, \msf{w}\rangle$ has two parts:
a reward function $R$ that indicates the immediate reward of taking action $b_i$ and 
$N$ that is the future value the advertiser gets by doing that action.
We write the value functions as follows:
\begin{align}
V(k, \theta; g_i) =
 \max_{b_i} \Big\{ R^{\theta}(b_i; g_i^{\theta}) + \delta N^{\theta}(b_i,k; g_i) \Big\} \label{eqn:value}
\end{align}

\begin{align}
R^{\theta}(b_i; g_i^{\theta})&= q(b_i; g_i^{\theta}) (v^{\theta}_i - d_i^{\theta}) \label{eq:reward}
\end{align}

$N^{\theta}(b_i,k; g_i)$, the future value that advertiser $i$ gets by bidding $b_i$ at state $\langle k, \theta\rangle$, 
consists of two part with the first part $N^\theta_{\msf{win}}$ being the value that the advertiser gets if it wins and the second part $N^\theta_{\msf{lose}}$ being the value when it loses. 
We treat $g_i$ as providing both $g_i^{\msf{m}}$ and $g_i^{\msf{w}}$.
$R^{\msf{w}}(b_i; g_i^{\msf{w}})$ and $R^{\msf{m}}(b_i; g_i^{\msf{m}})$ show the reward value that advertiser $i$ will receive if it wins an ad slot auction viewed by female and male. We have:
\begin{align*}
N^{\theta}(b_i,k; g_i) = 
q(b_i; g_i^{\theta}) * N^\theta_{\msf{win}}(k; g_i)
+ (1-q(b_i; g_i^{\theta})) * N^\theta_{\msf{lose}}(k; g_i)
\end{align*}
with
\begin{align*}
N^{\msf{m}}_{\msf{win}}(k; g_i) 
&= p V(k+1, \msf{m}; g_i) + (1-p) V(k+1, \msf{w}; g_i)\\
N^{\msf{w}}_{\msf{win}}(k; g_i) 
&= p V(k-1, \msf{m}; g_i) + (1-p) V(k-1, \msf{w}; g_i)\\
N^\theta_{\msf{lose}}(k; g_i)
&= p V(k, \msf{m}; g_i) + (1-p) V(k, \msf{w}; g_i)
\end{align*}
As for the two edge cases, their values are solely determined by the values of their successor states since the advertiser cannot win the current auction:
\begin{align*}
V(K,\msf{m}; g_i) &=
\delta * \left( pV(K,\msf{m}; g_i) + (1-p) V(K,\msf{w};g_i) \right) \\
V(-K,\msf{w}; g_i) &=
\delta * \left( pV(-K,\msf{m};g_i) + (1-p) V(-K,\msf{w}; g_i) \right)
\end{align*}

\subsection{Computing Optimal Bidding Strategies}
\label{sec:computing_optimal}

Computing $V$ with MDP solvers, such as value iteration, is
complicated by the bid space being continuous.
Computing $V$ for a discretization of this space will require a fine
discretization to avoid rounding errors, which will mean slow
convergence.
Using numerical optimization methods is complicated by $V$ not being a
linear function in $b_i$.
To avoid these complexities, we instead rewrite $V$ in a form that can
be solved without any optimization.

To identify the optimal bidding strategy,
we observe that the two edge cases do not involve a decision and
the strategy of bidding $0$ is forced for them.
We also observe that for the remaining states the valuation function
\eqref{eqn:value} includes many terms that do not change under various
bidding strategies. 
We collect these constants into a term $\Lambda_i$, which we can ignore
while optimizing the strategy.
We replace $q(b_i;g_i^\theta)d_i^t$ by $c(b_i;g_i^\theta)$ that indicates the estimated cost of each ad slot.
The remainder of the valuation function provides the 
\emph{conjoint valuation} function $\Phi_i^\theta$.
In more detail,
\begin{align}
\!\!\!V(k,\theta; g_i)
&= \max_{b_i} \Big\{ q(b_i;g_i^{\theta})\Phi_i^{\theta}(k;g_i) - c(b_i;g_i^{\theta}) + \Lambda_i(k;g_i) \Big\} \notag\\
&= \max_{b_i} \Big\{ q(b_i;g_i^{\theta})\Phi_i^{\theta}(k;g_i) - c(b_i;g_i^{\theta}) \Big\}+ \Lambda_i(k;g_i)\label{eq:v_theta}
\end{align}
where 
\begin{align*}
\Lambda_i(k;g_i^{\msf{m}},g_i^{\msf{w}}) =  \delta (pV(k,\msf{m}; g_i)
+ (1-p)V(k,\msf{w}; g_i) )
\end{align*}
The conjoint valuation $\Phi$ represents the reward for winning, both immediate and long-term, which is why it is multiplied by the probability of winning $q(b_i;g_i^{\theta})$.
The expected cost of winning $c(b_i; g_i^{\theta})$ is subtracted from this product.
$\Phi$ breaks down along the lines of winning and losing cases, as $N$ did:
\begin{align}\label{eq:conj_theta}
\Phi^{\theta}(k; g_i) = v_i^{\theta} + \delta (\Phi_{\msf{win}}^{\theta}(k; g_i) - \Phi^{\theta}_{\msf{lose}}(k; g_i))
\end{align}
where
\begin{align*}
\Phi^{\msf{m}}_{\msf{win}}(k; g_i) &= pV(k+1,\msf{m}; g_i) + (1-p)V(k+1,\msf{w}; g_i)\\
\Phi^{\msf{w}}_{\msf{win}}(k; g_i) &= pV(k-1,\msf{m}; g_i) + (1-p)V(k-1,\msf{w}; g_i)\\
\Phi^{\theta}_{\msf{lose}}(k; g_i) &= pV(k, \msf{m}; g_i) + (1-p)V(k,\msf{w}; g_i)
\end{align*}
The term $v_i^\theta$ represents the immediate value of winning the ad slot.
The reminder considers the gain that the advertiser gets from the future by winning (moving to a new state) or losing (staying put).
The difference between future rewards for winning and those for losing corresponds to the amount of overbidding called for, which explains the subtraction in \eqref{eq:conj_theta}.

The following theorem shows the usefulness of this decomposition.  It uses the following lemma:
\begin{lemma}[\citeauthor{iyer2011mean} \citeyear{iyer2011mean}]\label{lemma:max_f}
For any continues non-decreasing function $q(x)$ on $[0,1]\times[0,1]$,  function $f(x,v)= q(x)(v-x)+ \int_{0}^x q(u) \mathop{}\!\mathrm{d}u$ gains its maximum when $x=v$.
\end{lemma}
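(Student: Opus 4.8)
The plan is to avoid any appeal to differentiability of $q$ — it is assumed only continuous and non-decreasing — and instead to prove the stronger pointwise statement $f(v,v) \ge f(x,v)$ for every feasible $x$ directly. First I would subtract the two values and watch the immediate-reward and cost terms telescope. Since the term $q(v)(v-v)$ vanishes, $f(v,v)$ is just $\int_0^v q(u)\,\mathrm{d}u$, so the difference collapses to a single expression:
\begin{align*}
f(v,v) - f(x,v) = \int_x^v q(u)\,\mathrm{d}u - q(x)(v-x).
\end{align*}
The entire lemma thus reduces to showing that this quantity is nonnegative for all $x$.

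The second step is to exploit monotonicity of $q$ through a case split on the sign of $v-x$. If $x \le v$, then $q(u) \ge q(x)$ for every $u \in [x,v]$, so $\int_x^v q(u)\,\mathrm{d}u \ge q(x)(v-x)$ and the difference is nonnegative. If $x \ge v$, I would rewrite $\int_x^v q(u)\,\mathrm{d}u = -\int_v^x q(u)\,\mathrm{d}u$ and use $q(u) \le q(x)$ on $[v,x]$ to get $\int_v^x q(u)\,\mathrm{d}u \le q(x)(x-v)$, which again yields nonnegativity after rearranging. Both branches use nothing beyond the definition of a non-decreasing function and the monotonicity of the integral, so the argument goes through verbatim for a merely continuous $q$. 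For intuition (not strictly needed), when $q$ happens to be differentiable the claim is immediate: the product rule and the fundamental theorem of calculus give $\partial f/\partial x = q'(x)(v-x)$, which is $\ge 0$ for $x<v$ and $\le 0$ for $x>v$ since $q' \ge 0$, pinpointing $x=v$; the value of the direct comparison above is precisely that it reproduces this sign behavior without assuming $q'$ exists.

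The main obstacle here is bookkeeping rather than depth: one must keep the orientation of $\int_x^v$ consistent with the sign of $v-x$ so that the two cases glue into the clean inequality above, and one must read the conclusion correctly. For a general non-decreasing $q$ the maximum may be attained on an entire interval (wherever $q$ is locally flat), so the lemma asserts that $x=v$ \emph{attains} the maximum, not that it is the unique maximizer; uniqueness would require $q$ strictly increasing. I would flag this so that the later use of the lemma — selecting the optimal bid $b_i = \Phi_i^\theta$ by matching $x$ to the conjoint valuation $v$ — is justified as picking one optimal action from the argmax.
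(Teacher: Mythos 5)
Your argument is correct, and there is nothing in the paper to compare it against: the lemma is imported from Iyer et al.\ by citation, and the paper reproduces only the statement, not a proof. Your telescoping computation $f(v,v)-f(x,v)=\int_x^v q(u)\,\mathrm{d}u - q(x)(v-x)$ is right, and both branches of the case split apply monotonicity of $q$ and of the integral correctly, so you obtain the global pointwise inequality $f(v,v)\ge f(x,v)$ without any differentiability assumption. In fact your proof needs even less than the lemma hypothesizes --- continuity plays no role beyond guaranteeing integrability, and non-decreasingness alone suffices --- and it quietly repairs the statement's garbled domain ``$[0,1]\times[0,1]$'' by reading $q$ as a map $[0,1]\to[0,1]$, which is clearly what is intended given how the lemma is invoked with $q(x)=q(x;g_i^{\theta})=g_i^{\theta}(x)^{\alpha-1}$. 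Your closing caveat is also apt: since $q$ may be flat, $x=v$ attains but need not uniquely attain the maximum, so the application in Theorem~\ref{thm:optimal_kparity} should be read as exhibiting $\Phi_i^{\theta}(k;g_i)$ as \emph{an} optimal bid rather than \emph{the} optimal bid; the paper's phrasing elides this, and your flag is the more careful reading.
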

\begin{theorem}\label{thm:optimal_kparity}
For any given $g_i$ and $K$, the optimal bid at all states $\langle k, \theta\rangle$ other than the edge cases $\langle K, \msf{m}\rangle$ and $\langle -K, \msf{w}\rangle$ is $\Phi^{\theta}_i(k; g_i)$.
\end{theorem}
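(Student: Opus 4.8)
The plan is to show that, at every state other than the two edge cases, maximizing the value function \eqref{eq:v_theta} over the bid $b_i$ is exactly an instance of the one-dimensional optimization resolved by Lemma~\ref{lemma:max_f}, with the role of the target value played by the conjoint valuation $\Phi_i^{\theta}(k; g_i)$. The first thing I would record is that, holding the value function at the successor states fixed (as one does within a single Bellman update), both $\Phi_i^{\theta}(k; g_i)$ and the additive term $\Lambda_i(k; g_i)$ are \emph{constants} with respect to $b_i$: by \eqref{eq:conj_theta} they are assembled only from $v_i^{\theta}$, $\delta$, $p$, and the neighboring values $V(\cdot,\cdot; g_i)$. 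Hence the maximization in \eqref{eq:v_theta} reduces to choosing $b_i$ to maximize $q(b_i; g_i^{\theta})\,\Phi_i^{\theta}(k; g_i) - c(b_i; g_i^{\theta})$, and $\Lambda_i$ plays no role in the argmax.

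Next I would make the estimated cost $c$ explicit. Since the auction is second-price, a winner with bid $b_i$ pays the highest competing bid, so the expected cost is $c(b_i; g_i^{\theta}) = \int_0^{b_i} u \,\mathrm{d}q(u; g_i^{\theta})$, where $q(\cdot; g_i^{\theta})$ is the CDF of $\max b_{-i}^t$. Integration by parts gives
\begin{align*}
c(b_i; g_i^{\theta}) = b_i\, q(b_i; g_i^{\theta}) - \int_0^{b_i} q(u; g_i^{\theta}) \,\mathrm{d}u .
\end{align*}
Substituting into the objective and writing $\Phi$ for $\Phi_i^{\theta}(k; g_i)$ yields
\begin{align*}
q(b_i; g_i^{\theta})\,\Phi - c(b_i; g_i^{\theta}) = q(b_i; g_i^{\theta})\,(\Phi - b_i) + \int_0^{b_i} q(u; g_i^{\theta}) \,\mathrm{d}u ,
\end{align*}
which is precisely $f(b_i, \Phi)$ from Lemma~\ref{lemma:max_f} under the identification $x = b_i$ and $v = \Phi$. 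Because $q(\cdot; g_i^{\theta}) = g_i^{\theta}(\cdot)^{\alpha-1}$ is continuous and non-decreasing as a power of a CDF, the lemma applies and the maximizer is $b_i = \Phi = \Phi_i^{\theta}(k; g_i)$, which is the claim.

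The step I expect to be the main obstacle is not the optimization itself---once the objective is in the form $f(b_i,\Phi)$ the conclusion is immediate---but rather getting the two pieces to line up cleanly: deriving the second-price expected-cost identity and then verifying that the hypotheses of Lemma~\ref{lemma:max_f} are genuinely met, in particular that the target $\Phi_i^{\theta}(k; g_i)$ lies in the domain on which $q$ is defined and that $q$ inherits continuity and monotonicity from $g_i^{\theta}$. I would also state explicitly that the recursive dependence of $\Phi$ on $V$ at other states is \emph{not} a difficulty for this theorem: it characterizes only the argmax of a single Bellman step, so $\Phi$ is legitimately treated as a fixed constant during the maximization over $b_i$, and the fixed-point question of whether $V$ itself is well-defined is orthogonal to the claimed optimal-bid formula.
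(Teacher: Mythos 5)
Your proposal is correct and follows essentially the same route as the paper's proof: drop $\Lambda_i$ as constant in $b_i$, rewrite the expected cost as $c(b_i;g_i^\theta) = q(b_i;g_i^\theta)b_i - \int_0^{b_i} q(u;g_i^\theta)\,\mathrm{d}u$, and apply Lemma~\ref{lemma:max_f} with $v = \Phi_i^\theta(k;g_i)$ to conclude the maximizer is $b_i = \Phi_i^\theta(k;g_i)$. Your added derivation of the cost identity via integration by parts and your explicit remark that $\Phi$ is a constant within a single Bellman update are minor elaborations of steps the paper asserts by citation, not a different argument.
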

\begin{proof}
Without loss of generality we assume all of the bids are between $0$ and $1$. 
The bidding strategy that maximize the equation~\eqref{eq:v_theta} will be the optimal strategy. 
To maximize this equation, we can omit the $\Lambda_i$ function since it is constant for each $b_i$. 
Similar to~\cite{iyer2011mean}, we rewrite the cost function $c(b_i;g_i^\theta)$ as 
\begin{align*}
  c(b_i;g_i^\theta) = q(b_i;g_i^\theta) b_i - \int_{0}^{b_i} q(u;g_i^\theta) du
\end{align*}

Now, we can rewrite the decision problem of the advertiser $i$ as
\begin{align*}
&\max_{b_i} \Big\{q(b_i;g_i^{\msf{m}})\Phi_i^{\msf{m}}(k;g_i^{\msf{m}},g_i^{\msf{w}}) - c(b_i;g_i^{\msf{m}})\Big\} \\
&=\max_{b_i} \Big\{ q(b_i;g_i^{\msf{m}})\Phi_i^{\msf{m}}(k;g_i) - \Big(   q(b_i;g_i^{\msf{m}})b_i - \int_0^{b_i}  q(u;g_i^{\msf{m}}) \mathop{}\!\mathrm{d}u \Big) \Big\}\\ 
&=\max_{b_i} \Big\{ q(b_i;g_i^{\msf{m}})(\Phi_i^{\msf{m}}(k;g_i)- b_i )+ \int_0^{v_i}  q(u;g_i^{\msf{m}}) \mathop{}\!\mathrm{d}u \Big\} 
\end{align*}

We know $q(b_i;g_i^{\msf{m}})$ is a continues non-decreasing function. 
Therefore, we can use Lemma~\ref{lemma:max_f} with $q(x) = q(x;g_i^{\msf{m}})$
to conclude equation~\eqref{eq:v_theta} is at its maximum when the bid is $\Phi_i^{\msf{m}}(k;g_i)$ for $\theta = \msf{m}$.
Similarly we can show for equation~\eqref{eq:v_theta} that the optimal bid is $\Phi_i^{\msf{w}}(k;g_i)$ where $\theta = \msf{w}$. 
\end{proof}

This theorem means that we do not need to search the space of possible bid values to find the optimal bid.
Rather, we can just compute the optimal bid using $\Phi$.
While $\Phi$ depends upon the value function $V$, we can recursively make use of this fact to compute $V$ without such a search either.
In particular, the theorem implies that 
\begin{align*}
V(k, \theta; g_i) &= R^{\theta}(\Phi_i^\theta(k; g_i); g_i^{\theta}) + \delta N^{\theta}(\Phi_i^\theta(k; g_i), k; g_i)
\end{align*}

However, this equation is still not a closed form solution.
Thus, Algorithm~\ref{alg:converge} does this calculation iteratively to converge to the states' values. 
Although, showing the convergence in general is an open problem, as discussed in Section~\ref{sec:experiments}, our experiments find convergence within a reasonable tolerance within a feasible number of iterations.
\begin{algorithm}[t]
\SetAlgoLined
\DontPrintSemicolon

\KwIn{$K,g_i, \alpha, v^m, v^w, \epsilon$}
    {\textbf{Initialize} $V[-K:K, \msf{m}] \gets \frac{v^m + v^w}{2}; V[-K:K, \msf{w}] \gets \frac{v^m + v^w}{2}$}

        {\Repeat{$\Delta < \epsilon$}{
        {$\Delta \gets 0$}

        {\For{$k$ \textup{in} $\{-K, \ldots, K\}$}
          {
            {\For{$\theta$ \textup{in} $\{\msf{m}, \msf{w}\}$}
              {
                {$V'[k, \theta] \gets R^\theta(\Phi^{\theta}(k); g_i^{\theta}) + \delta N^{\theta}(\Phi^{\theta}(k), k; g_i)$}
                {$\Delta \gets \max(\Delta,\: |V'[k,\theta] - V[k,\theta]|)$}
              }
            }
          }
        }
        {$V \gets V'$}
        }}

    \caption{Iterative approach to find $V$}
    \label{alg:converge}
\end{algorithm}

To use our approach, an advertiser (or DSP) runs
Algorithm~\ref{alg:converge} to compute the value function $V$ and
stores it as a look-up table.
Then, for each new ad auction, the advertiser first checks if it
winning the auction would violate the parity constraint.
If so, it will not participate in the auction (i.e., bids zero).
Otherwise, The advertiser bids the value of $\Theta_i^\theta(k)$,
which can be easily computed from value functions.

\section{Ratio Constraints}
\label{sec:ratio}

While constraints on the difference between the number of ads shown to
each gender are intuitive, the EEOC's four-fifths rule found in US
regulations against disparate impact in employment instead focuses on
a ratio~\citep{eeoc1978}.
The ratio considered is not simply between the number of ads shown to
each gender.
Rather, it acknowledges that parity can be unrealistic due to having
differing numbers of male and female applicants.
It adjusts for that factor by comparing the fraction of female
applicants receiving a job offer to the fraction male applicants
receiving a job offer.
It requires that this ratio of ratios be between $5/4$ and $4/5$.
Similarly, our ratio constraint compares two ratios, checking whether
the fraction of female ad slots won is within a factor of $r$ to the
fraction of male ad slots won.


Strictly enforcing this check creates problems when the number of
slots seen so far is small since the fractions won may be very far
apart even when the number of ads shown to each gender only differs by
$1$.
To avoid this issue, we also allow an additive difference in the
number of ads show to each gender.
The resulting rule may be viewed as a hybrid between a pure ratio
constraint and the absolute constraint we have already presented.


We use similar notation as in Section~\ref{sec:model} to express this
constraint in a manner that avoids division by zero.
\begin{definition}[$(r,K)$-ratio]
\label{def:rk}
An advertiser obeys a $(r,K)$-ratio constraint, 
for a set of groups $\Gamma$ 
iff, after each auction, for all groups $i$ and $j$ in $\Gamma$,
the number of auctions that it wins satisfies 
$r p_{i} n_{j} \leq p_{j} n_{i} + K$
where $p_{i}$ and $p_{j}$ is 
the probability of seeing slots for groups $i$ and $j$, respectively.
\end{definition}


\subsection{Modeling}

Similar to the $K$-parity constraint, we limit ourselves to the case
where $\Gamma$ and $\Theta$ only contain two types, which we treat as male and
female.
We use $p$ as the probability of a male.
We denote each state by a triplet $\langle \m,\w,\theta \rangle$,
where $\m$ and $\w$ is the current number male and female viewers,
respectively.

While we reuse the immediate reward function 
$R^{\theta}$ from~\eqref{eq:reward},
we rewrite the value function $V$ and future value function $N$.
When winning the slot would not violate the constraint,
\begin{align*}
V(\w,\m,\theta; g_i)
&= \max_{b_i} \left\{ R^\theta (b_i;g_i^\theta)  + \delta N^\theta(b_i,\m,\w;g_i) \right\}
\end{align*}
When offered a male that may not be won because $r (1-p) (\m+1) > p \w + K$ where $\m$ is the current number of males won,
\begin{align*}
V(\w,\m,\msf{m}; g_i) 
&= \delta \left( p V(\m,\w,\msf{m};g_i) + (1-p) V(\m,\w,\msf{w};g_i) \right)
\end{align*}
When a female may not be won since $r p (\w+1)  > (1-p) \m + K$,
\begin{align*}
V(\w,\m,\msf{w}; g_i) 
&= \delta \left(p V(\m,\w,\msf{m};g_i) + (1-p) V(\m,\w,\msf{w};g_i) \right)
\end{align*}
We call these two cases \emph{edge cases}.


We set the future value $N^{\theta}(b_i,\m,\w; g_i)$ at
\begin{align*}
q(b_i; g_i^{\theta}) * N^\theta_{\msf{win}}(\m,\w; g_i)
+ (1-q(b_i; g_i^{\theta})) * N^\theta_{\msf{lose}}(\m,\w; g_i)
\end{align*}
with
\begin{align*}
    N^{\msf{m}}_{\msf{win}}(\m,\w; g_i) 
    &= p V(\m{+}1,\w,\msf{m}; g_i) + (1{-}p) V(\m{+}1,\w, \msf{w}; g_i)\\
    N^{\msf{w}}_{\msf{win}}(\m,\m; g_i) 
    &= p V(\m,\w{+}1, \msf{m}; g_i) + (1{-}p) V(\m,\w{+}1, \msf{w}; g_i)\\
    N^\theta_{\msf{lose}}(\m,\w; g_i)
    &= p V(\m,\w, \msf{m}; g_i) + (1{-}p) V(\m,\w, \msf{w}; g_i)
\end{align*}

\subsection{Computing Optimal Bidding Strategies}

We use a similar approach as in Section~\ref{sec:computing_optimal} to
find optimal strategies.
As before, we force the strategy to bid zero when winning would
violate the constraint and do not include these cases in the
optimization.
We rewrite the value $V(\m,\w\theta; g_i)$ as 
\begin{align*}
\max_{b_i} \Big\{q(b_i;g_i^{\theta})\Phi_i^{\theta}(\m,\w;g_i) - c(b_i;g_i^{\theta}) \Big\}
  + \Lambda_i(\m,\w;g_i)
\end{align*}
where 
\begin{align*}
\Lambda_i(\m,\w; g_i^{\msf{m}},g_i^{\msf{w}}) 
&= \delta (pV(\m,\w,\msf{m}; g_i) + (1-p)V(\m,\w,\msf{w}; g_i) )
\end{align*}
and
\begin{align*}
\Phi^{\theta}(\m,\w; g_i) 
&= v_i^{\theta} + \delta (\Phi_{\msf{win}}^{\theta}(\m,\w; g_i) - \Phi^{\theta}_{\msf{lose}}(\m,\w; g_i))
\end{align*}
where
\begin{align*}
\Phi^{\msf{m}}_{\msf{win}}(\m,\w; g_i) &= p V(\m{+}1,\w ,\msf{m},; g_i) + (1{-}p) V(\m{+}1,\w, \msf{w}; g_i)\\
\Phi^{\msf{w}}_{\msf{win}}(\m,\w; g_i) &= p V(\m,\w{+}1 ,\msf{m}; g_i) + (1{-}p) V(\m,\w{+}1, \msf{w}; g_i)\\
\Phi^{\theta}_{\msf{lose}}(\m,\w; g_i) &= pV(\m,\w ,\msf{m}; g_i) + (1{-}p)V(\m,\w,\msf{w}; g_i)
\end{align*}

\begin{theorem}
\label{thm:optimal_ratio}
For all $r$, $K$, groups $i$,
and states $\langle \m,\w, \theta \rangle$ other than the edge cases,
the optimal bid is $\Phi^{\theta}_i(\m,\w; g_i)$.
\end{theorem}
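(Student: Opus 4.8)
The plan is to observe that Theorem~\ref{thm:optimal_ratio} is structurally identical to Theorem~\ref{thm:optimal_kparity} and reuse its proof almost verbatim. The only difference between the two settings is that the ratio constraint changes \emph{which} states count as edge cases, not the algebraic form of the Bellman equation at the non-edge states. After the $\Phi/\Lambda$ decomposition derived just above the statement, every non-edge state already has the value function written as
\begin{align*}
V(\m,\w,\theta; g_i) = \max_{b_i}\Big\{ q(b_i;g_i^{\theta})\Phi_i^{\theta}(\m,\w;g_i) - c(b_i;g_i^{\theta})\Big\} + \Lambda_i(\m,\w;g_i),
\end{align*}
which is the exact analogue of \eqref{eq:v_theta} with the scalar difference $k$ replaced by the pair $(\m,\w)$. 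Since this is the identical optimization objective (the pair $(\m,\w)$ merely indexes the state and appears only inside $\Phi_i^\theta$, which is a constant with respect to $b_i$), the same argument will apply with no new ideas.

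The key steps, in order, are: first, note that $\Lambda_i(\m,\w; g_i)$ does not depend on $b_i$ and can therefore be dropped from the maximization without affecting the optimal bid. Second, apply the same cost-function identity used in the proof of Theorem~\ref{thm:optimal_kparity},
\begin{align*}
c(b_i;g_i^{\theta}) = q(b_i;g_i^{\theta}) b_i - \int_{0}^{b_i} q(u;g_i^{\theta})\,\mathop{}\!\mathrm{d}u,
\end{align*}
and substitute it into the objective. Third, rearrange the result into the form
\begin{align*}
q(b_i;g_i^{\theta})\big(\Phi_i^{\theta}(\m,\w;g_i) - b_i\big) + \int_{0}^{b_i} q(u;g_i^{\theta})\,\mathop{}\!\mathrm{d}u.
\end{align*}
Fourth, invoke Lemma~\ref{lemma:max_f} with $q(x) = q(x;g_i^{\theta})$ and $v = \Phi_i^{\theta}(\m,\w;g_i)$, using the fact that $q(\cdot;g_i^{\theta})$ is continuous and non-decreasing, to conclude that this expression is maximized precisely at $b_i = \Phi_i^{\theta}(\m,\w;g_i)$. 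This is carried out once for $\theta = \msf{m}$ and once for $\theta = \msf{w}$, exactly mirroring the earlier proof.

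There is essentially no substantive obstacle here; the entire content is the recognition that the ratio constraint leaves the non-edge Bellman recursion in the same conjoint-valuation form. The one point worth checking carefully is that the edge-case definitions for the ratio constraint ($r(1-p)(\m+1) > p\w + K$ for a male and $rp(\w+1) > (1-p)\m + K$ for a female) genuinely exclude exactly those states where no bid can be placed, so that the maximization form is valid at \emph{all} remaining states. Once that bookkeeping is confirmed, the hypotheses of Lemma~\ref{lemma:max_f}—continuity and monotonicity of $q$—are inherited unchanged from the $K$-parity setting, and the conclusion follows immediately. If desired, the proof can simply state that the argument is identical to that of Theorem~\ref{thm:optimal_kparity} with $k$ replaced by $(\m,\w)$.
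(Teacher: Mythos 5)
Your proposal is correct and matches the paper exactly: the authors simply state that the proof is ``similar to that of Theorem~\ref{thm:optimal_kparity},'' and your spelled-out version---dropping $\Lambda_i$, substituting the cost identity, and invoking Lemma~\ref{lemma:max_f} at $v = \Phi_i^{\theta}(\m,\w;g_i)$---is precisely that argument with $k$ replaced by $(\m,\w)$. Your added check that the edge-case conditions exclude exactly the states where bidding is forbidden is a sensible bit of bookkeeping the paper leaves implicit.
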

The proof is similar to that of Theorem~\ref{thm:optimal_kparity}.

This theorem eliminates the need for searching the space of possible
bids at each state to find the optimal one.
Whereas we could bound the state space for $K$-parity by
tracking the difference $k$ instead of the actual numbers of male and
female ad slots won, we cannot similarly bound the state space for the
$(r,K)$-ratio constraint.
In practice, however, each advertiser either has a limited budget or
is advertising for a limited time allowing us to estimate a finite set
of reachable states.
We use $\mu$ to indicate estimated the maximum number of male ad slots won
in our experiments.
Algorithm~\ref{alg:converge_ratio} computes the value of each state
reachable assuming $\mu$.

\begin{algorithm}[t]
    \SetAlgoLined
    \DontPrintSemicolon
    
    \KwIn{$r,K,p,g_i, \alpha, v^m, v^w, \epsilon,\mu $}
        {\textbf{Initialize} $V[0 : M,0: r\frac{(1-p)}{p}M + K ,\msf{m}] \gets \frac{v^m + v^w}{2};$ \mbox{}\hspace{8em}\phantom{\textbf{Initialize}} $V[0 : M,0: r\frac{(1-p)}{p}M + K ,\msf{w}] \gets \frac{v^m + v^w}{2}$}
    
            {\Repeat{$\Delta < \epsilon$}{
            {$\Delta \gets 0$}
    
            {\For{$\m$ \textup{in} $\{0, \ldots, \mu \}$}
                {\For{$\w$ \textup{in} $\{0, \ldots , r\frac{(1-p)}{p}\mu + K \}$}
                {

                    {\For{$\theta$ \textup{in} $\{\msf{m}, \msf{w}\}$}
                    {
                        {$V'[\m,\w, \theta] \gets R^\theta(\Phi^{\theta}(\m,\w); g_i^{\theta}) + \delta N^{\theta}(\Phi^{\theta}(\m,\w), \m,\w; g_i)$}
                        {$\Delta \gets \max(\Delta,\: |V'[\m,\w,\theta] - V[\m,\w,\theta]|)$}
                    }
                    }
                }
                }
            
            }
            {$V \gets V'$}
            }}
    
        \caption{Iterative approach to find $V$}
        \label{alg:converge_ratio}
\end{algorithm}

An advertiser using our approach, does so in the same manner as with
our approach to parity constraints.
That is, it first runs Algorithm~\ref{alg:converge_ratio} and stores
$V$ as a look-up table.
It skips auctions when winning would violate the constraint and
otherwise bids $\Theta_i^\theta(k)$, computed from $V$.

We can extend this approach to recover if the 
advertiser underestimates $\mu$.
In this case, the advertiser can use a linear approximation to
estimate the optimal bid.
To do so, let $\rho = \frac{\w}{\m} (\mu-1)$.
If $\rho$ is an integer value, then the advertiser bids
$\Phi^{\theta}(\rho, \mu-1)$.
Otherwise, the advertiser bids
$\Phi^{\theta}(\lfloor\rho\rfloor,\mu-1) 
 \,+\, (\rho - \lfloor\rho\rfloor)
  * (\Phi^{\theta}(\lceil\rho\rceil,\m) - \Phi^{\theta}(\lfloor\rho\rfloor,\m))$.





\section{Experiments}
\label{sec:experiments}
We simulate various scenarios to show the
feasibility of our method and to measure the impact of our fairness constraints
on utility.
To do so, we implemented a second-price auction simulator in Python,
where each advertiser gets the gender of the website viewer
before selecting its bid and participating
in the ad slot auction.
To simulate the viewer, we draw their genders
independent and identically from a binomial distribution with
probability $p$ where $p$ is the probability of the viewer be male.
 
We focus on a single advertiser $i$ and measure how its utility
changes when it has either one of our fairness constraints or not.
When having fairness constraints, it uses our bidding strategy, with
$\delta$ set to $0.999$ (unless otherwise noted) and $\epsilon$ set to $0.001$.
When not, it bids it immediate value $v_i^t$ for the ad slot $t$, as is
rational for an unrestricted second-price auction.
We assume that the other advertisers are unrestricted, that they
always bid their values.
To obtain distributions over ad values, we used both a real dataset (The Yahoo!\@ A1 Search Marketing Advertiser Bidding Dataset)
and a simulated one. The Yahoo!\@ A1 data does not have exact timestamp so we could not use it to estimate the number of advertisers (i.e., $\alpha$) for each ad auction. To estimate $\alpha$, we visited top websites\footnote{based on \url{https://www.alexa.com/topsites}} that have ads using header bidding method~\cite{sayedi2018real} for one month (June 2019) and collected how many advertisers bid on a specific ad slot. In our experiments we never saw more than 10 advertisers bid on an ad slot suction. In line with our observation, we assume that there are $\alpha=10$
advertisers bidding for each ad slot.

\subsection{Real Dataset}
 
The Yahoo!\@ A1 Search Marketing Advertiser Bidding Dataset contains
anonymized bids of advertisers participating in Yahoo!\@ Search
Marketing auctions for the top 1000 search queries from June 15, 2002,
to June 14, 2003.
The dataset includes 18~millions bids from more than 10,000
advertisers, but without the exact timestamps or information about the
ad viewer.
Each record in this dataset indicates a course timestamp with 15
minutes precision, the advertiser, the keyword, and the bid.
\begin{figure}
\includegraphics[scale=0.4]{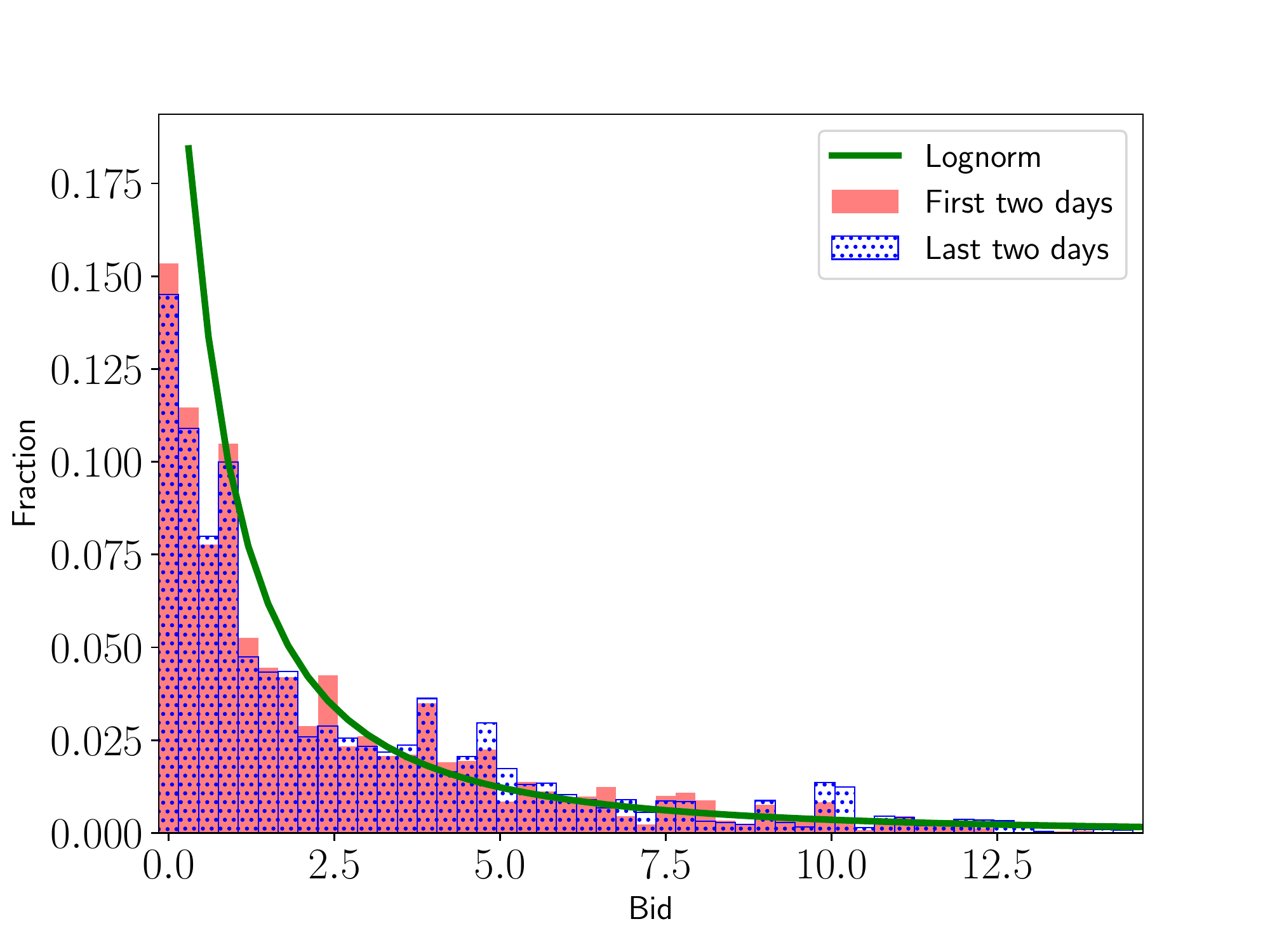}
\caption{Distribution of the bids for two different periods. The distributions are similar in both periods which supports the assumption that bids distribution are stationary.}
\label{fig:bid_distribution}
\end{figure}
 
In our analysis, we assumed bids have stationary distribution.
We evaluate this assumption on our dataset.
We use a specific keyword (keyword number 2) and we gathered all of
bids from different advertisers in four days period (starting
2/15/2003).
Then, we compute the empirical distribution of the bids of the first
two days and the second two days.
Figure~\ref{fig:bid_distribution} presents the distribution of the
bids for these periods, showing that the distributions are very
similar in both periods, supporting our stationarity assumption.
The figure also shows that the bids follow a log-normal distribution,
in line with the findings of Balseiro~et~al.~\citeyearpar{balseiro2017budget}.
 
\begin{figure*}
 \centering
 \hfill
\begin{subfigure}{.3\linewidth}
  \centering
  \includegraphics[scale=.28]{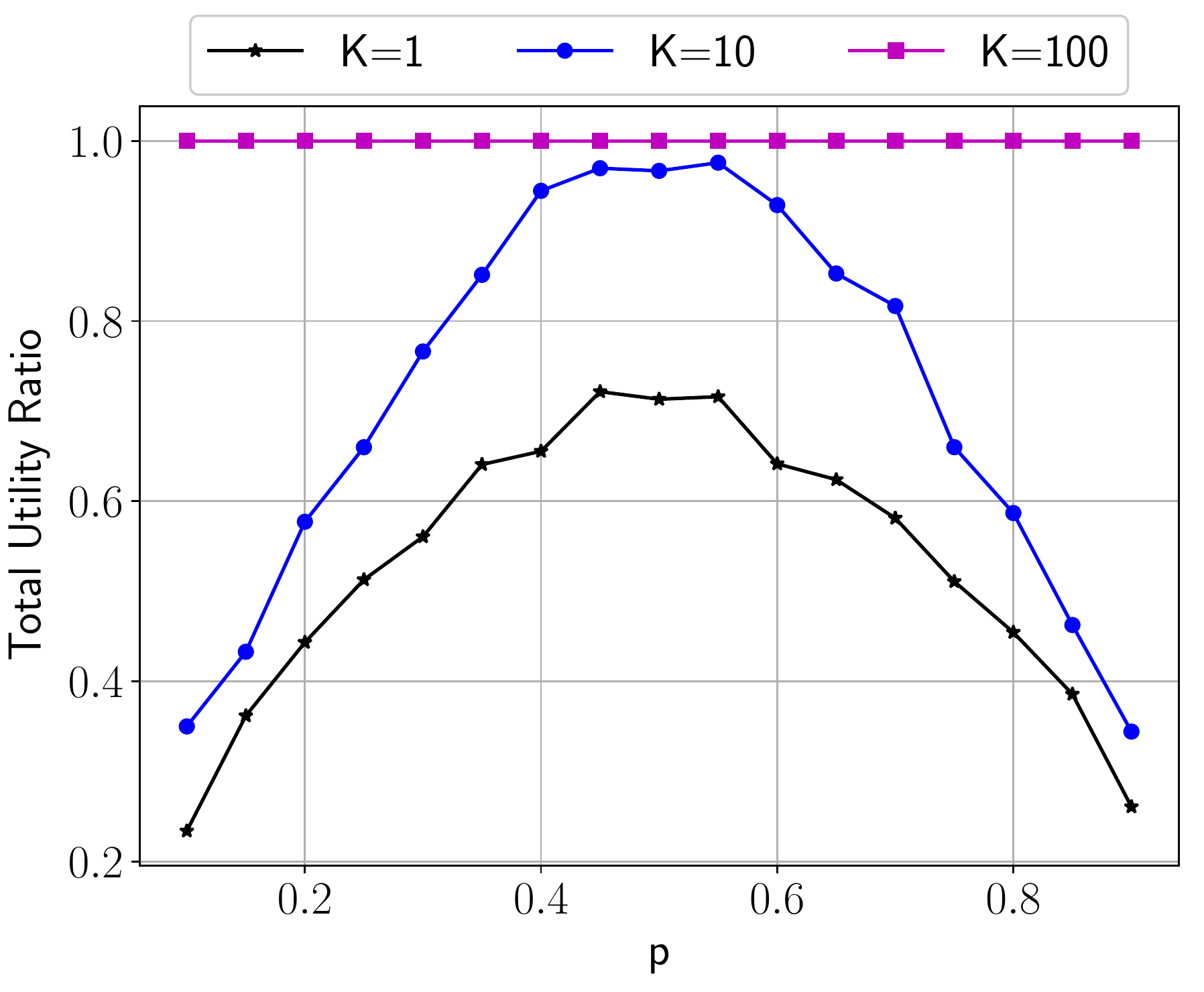}
  \caption{Cost of $K$-parity }
  \label{fig:kparity-yahoo}
\end{subfigure}
\hfill
\begin{subfigure}{.3\linewidth}
 \centering
  \includegraphics[scale=.28]{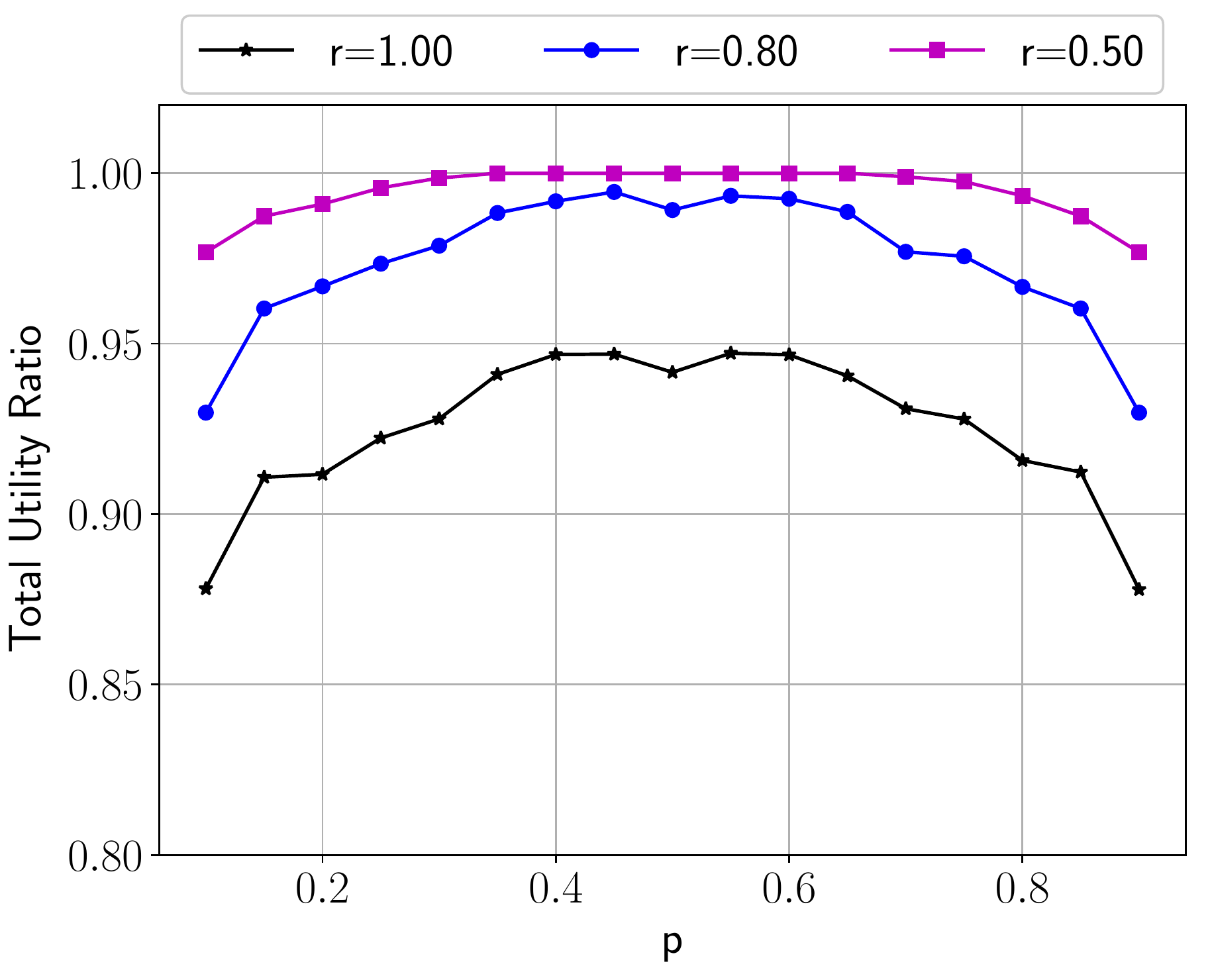}
  \caption{Cost of $(r,5)$-ratio}
  \label{fig:rratio-yahoo}
\end{subfigure}
\hfill
\begin{subfigure}{.3\linewidth}
 \centering
  \includegraphics[scale=.28]{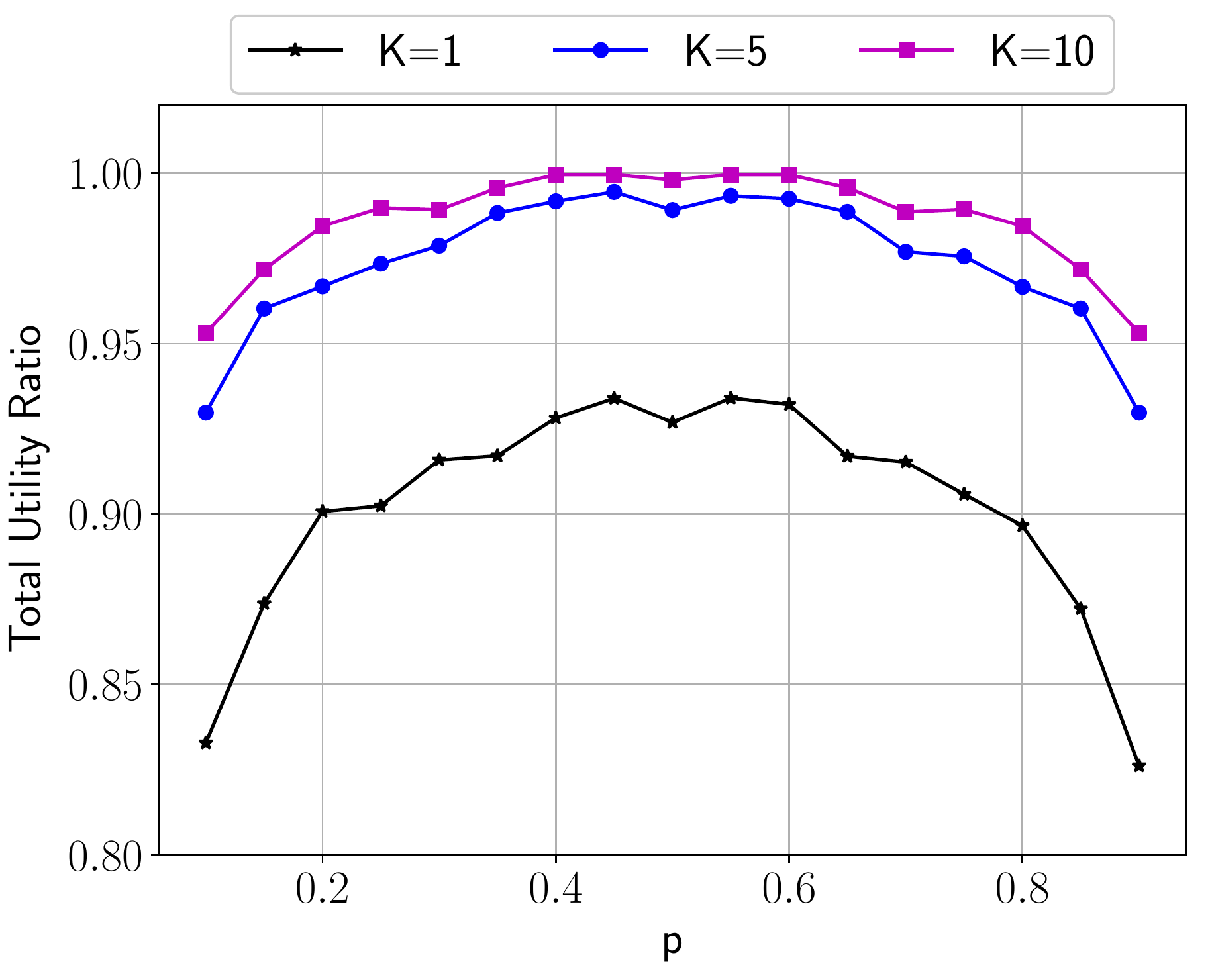}
  \caption{Cost of $(0.8,K)$-ratio}
  \label{fig:rratio-yahoo}
\end{subfigure}

\hfill
\begin{subfigure}{.45\linewidth}
 \centering
  \includegraphics[scale=.3]{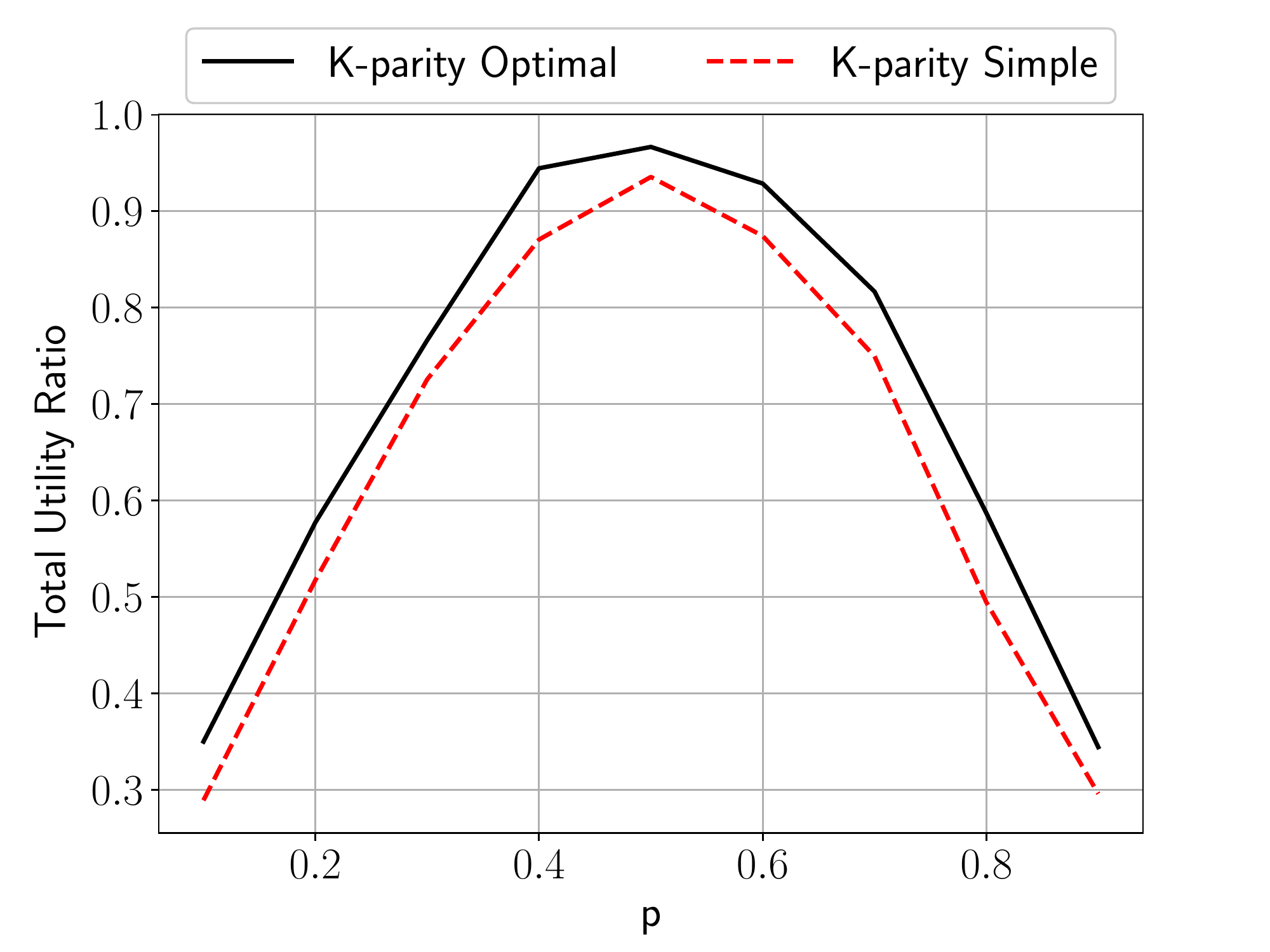}
  \caption{Costs for $10$-parity for the optimal strategy and for simple immediate value}
  \label{fig:kcompare-yahoo}
\end{subfigure}
\hfill
\begin{subfigure}{.45\linewidth}
 \centering
  \includegraphics[scale=.3]{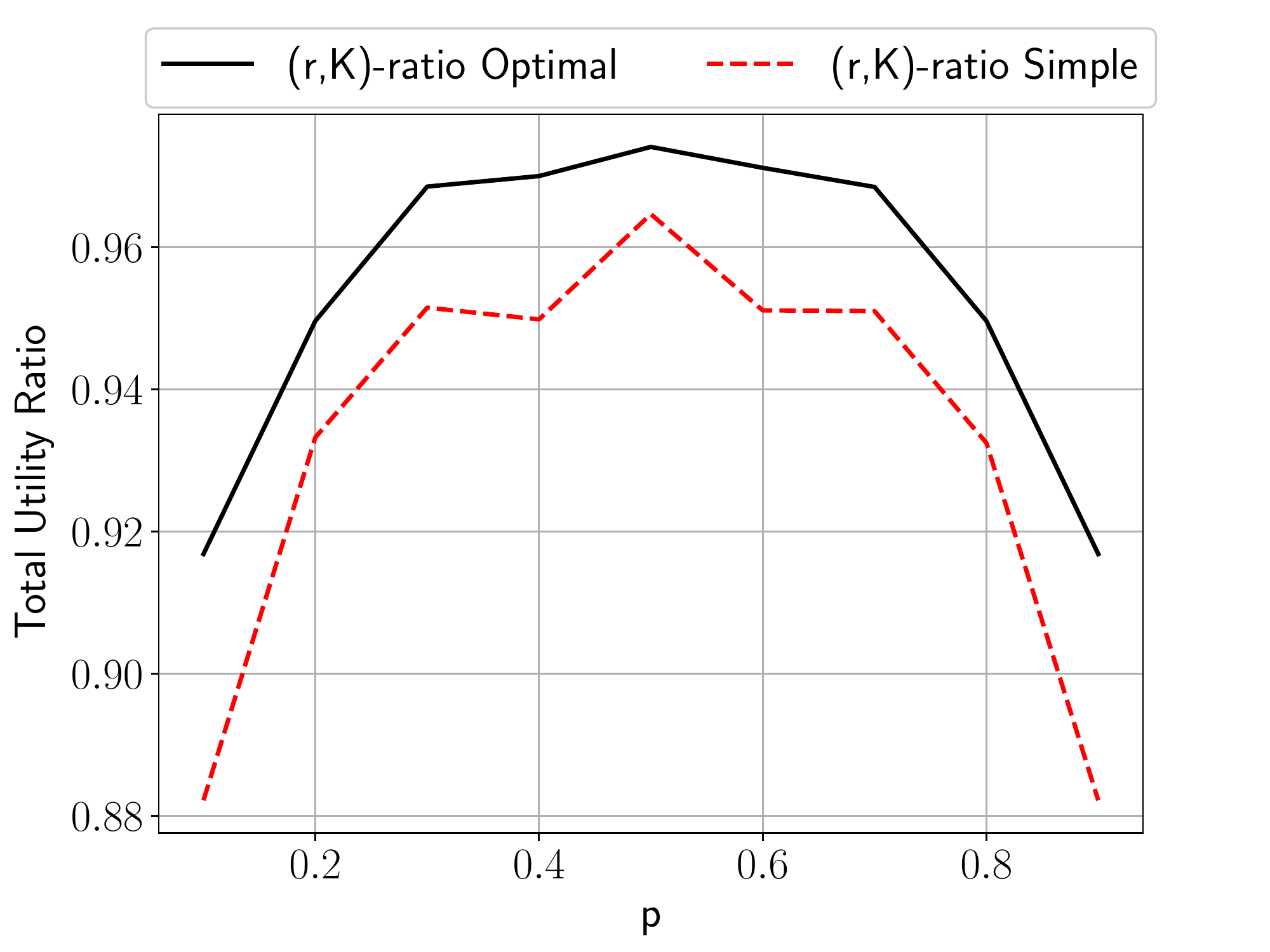}
  \caption{Costs for $(0.8,5)$-ratio for the optimal strategy and for simple immediate value }
  \label{fig:rcompare-yahoo}
\end{subfigure}
\hfill
\caption{Experimental results for Yahoo A1 bidding dataset}
\label{fig:yahoo}
\end{figure*}
 
Each keyword in our dataset has a different bid value distribution and
the restricted adviser can model each keyword separately.
We use the similar approach in our simulations and for each simulation
we compute the optimal bidding strategy for a specific keyword.
We assume that restricted advertiser updates his model parameters
every two days.
 
As mentioned, the Yahoo A1 dataset does not contain the exact
timestamps.
Therefore, we cannot exactly determine which advertisers participated
in any single ad auction.
We randomly select a set of advertisers' bids from each 15 minutes
interval for each of our ad auctions.
Since the dataset does not include information about the viewers, we
sample the bids for both female and male viewers from the same set of
bid values, making their values equal.
 
Figures~\ref{fig:kparity-yahoo} and~\ref{fig:rratio-yahoo} show the
total utility ratio of the $K$-parity and $(r, K)$-ratio versions to
the unrestricted version of the advertiser $i$ for various values of
$K$, $r$, and $p$ on Yahoo A1 bid dataset.
Here, and in the other simulations, we compute this ratio by
simulating restricted and unrestricted
versions of the advertiser $i$, using the same draw of values across the
two versions.
We do this 100 times, computing the average of total utilities $U_i$ for each version.
We then plot the ratio of these two averages.
Since the value of ad slots for both female and male viewers are equal, the total utility of an
unrestricted advertiser will not change for different values of $p$.
On the other hand, a restricted advertiser will get different utilities
based on the distribution of the men and women viewers.
$K$-parity and $(r, K)$-ratio constraints are harder to achieve for extreme values of $p$.
Turning to the effects of $K$,
the results show that when $K$ is large, the $K$-parity advertiser can
reach the utility of the unrestricted advertiser.
Also by relaxing $r$, $r$-ratio advertiser achieves higher utility.
To show the benefit of our approach compared to simply bidding
immediate values, we compare the utility ratio both approaches.
Figure~\ref{fig:kcompare-yahoo} shows that our bidding strategy allows
the advertiser achieve a higher utility.

\subsection{Synthetic Data}
 
A major limitation of the real dataset for our purposes is that it
does not show which ad slots are for men and which for women.
Thus, we use a synthetic dataset to explore how changing their
relative values affects the advertiser's utility.
We generate two synthetic datasets using a log-normal distribution to
sample the advertisers bids.
Table~\ref{tab:values} shows the model parameter settings used for the
two scenarios.
 
\begin{table}
\centering
\caption{Parameters for the log-normal distribution used in the modeling the bids in the ad slot auctions.  $\sigma^2$ is always $0.7$.}
\label{tab:values}
\begin{tabular}{@{}lcccc@{}}
\toprule
Name & \multicolumn{2}{c}{Others}& \multicolumn{2}{c}{Advertiser $i$} \\
& $\mu_{-i}^{\msf{m}}$ & $\mu_{-i}^{\msf{w}}$ & $\mu_i^{\msf{m}}$ & $\mu_i^{\msf{w}}$\\
\midrule
\femalevalue & -2.8 & -2.8  & -3.5  & -2.4\\
\femaleprice & -3.5 & -2.4  & -2.8  & -2.8\\ 
\bottomrule
\end{tabular}
\end{table}
\begin{figure*}[t]
\centering
%
%
\hfill
\begin{subfigure}{.32\linewidth}
\centering
\includegraphics[scale=0.3]{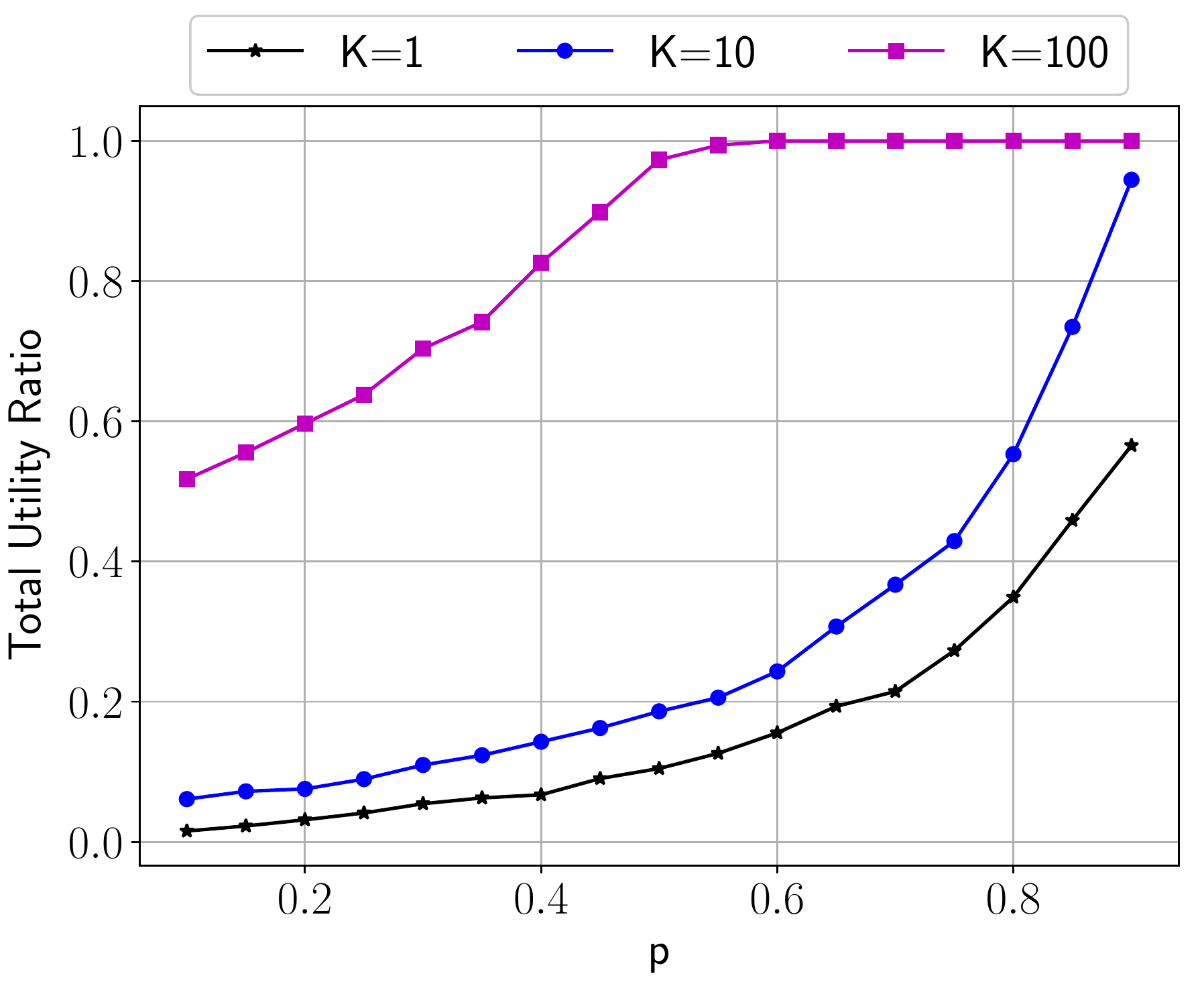}
\caption{Cost of $K$-parity for \emph{\femalevalue} scenario}
\label{fig:diff_p_Kvalues}
\end{subfigure}
\hfill
\begin{subfigure}{.32\linewidth}
\centering
\includegraphics[scale=0.3]{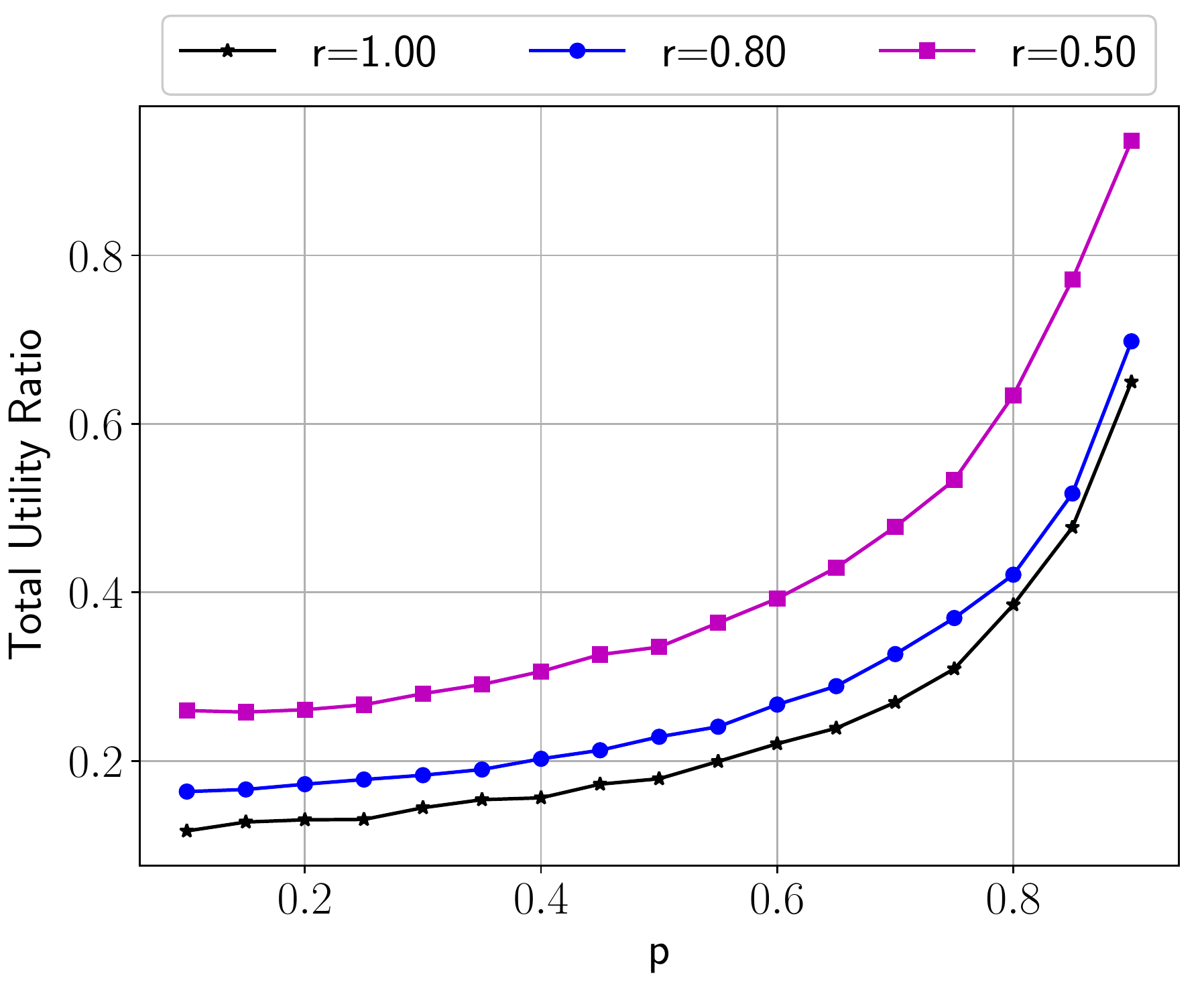}
\caption{Cost of $(r,5)$-ratio for \emph{\femalevalue} scenario}
\label{fig:diff_p_Kvalues}
\end{subfigure}
\hfill
\begin{subfigure}{.32\linewidth}
\centering
\includegraphics[scale=0.3]{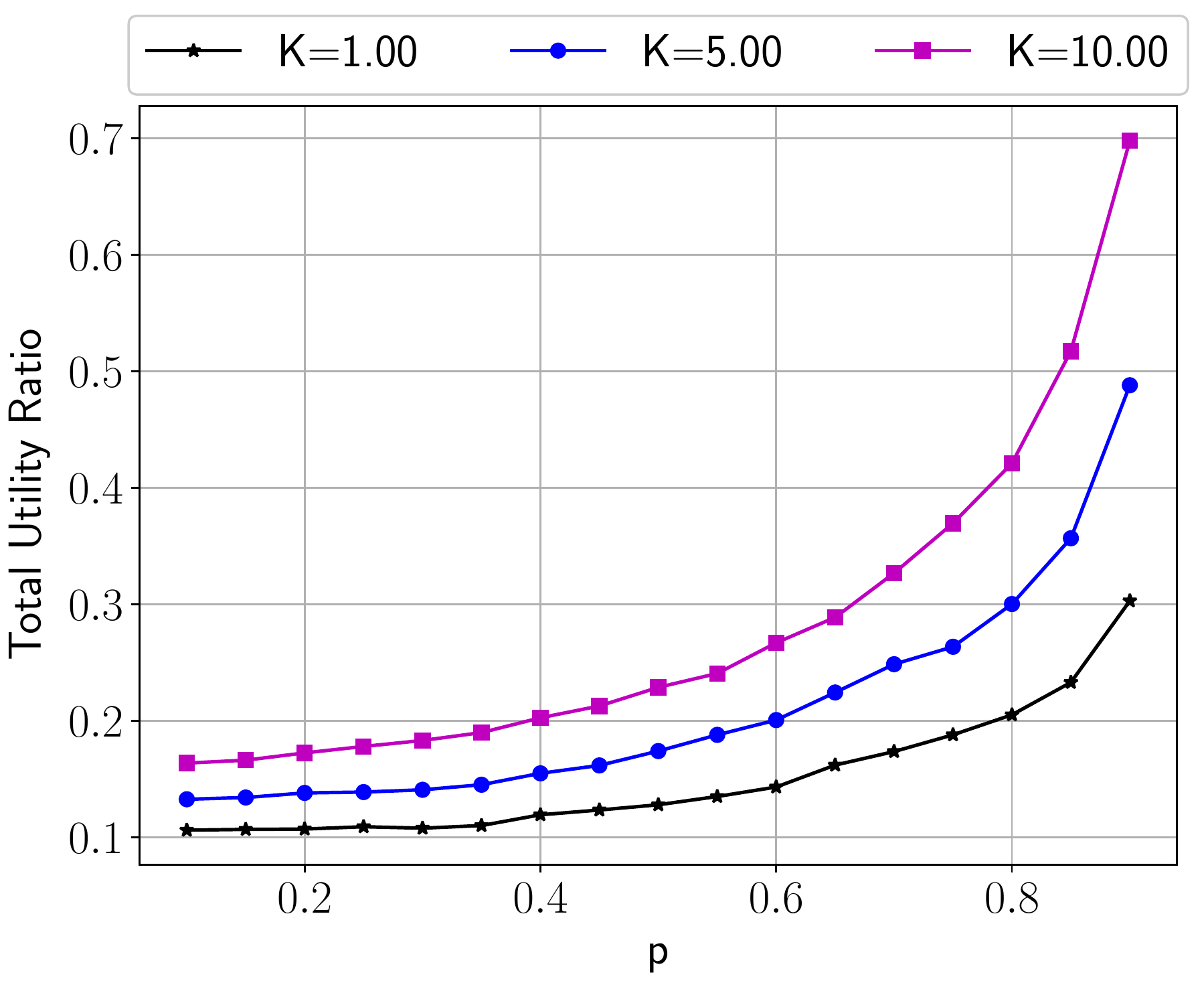}
\caption{Cost of $(0.8,K)$-ratio for \emph{\femalevalue} scenario}
\label{fig:diff_p_Kvalues}
\end{subfigure}
\hfill

\hfill
\begin{subfigure}{.4\linewidth}
\centering
\includegraphics[scale=0.3]{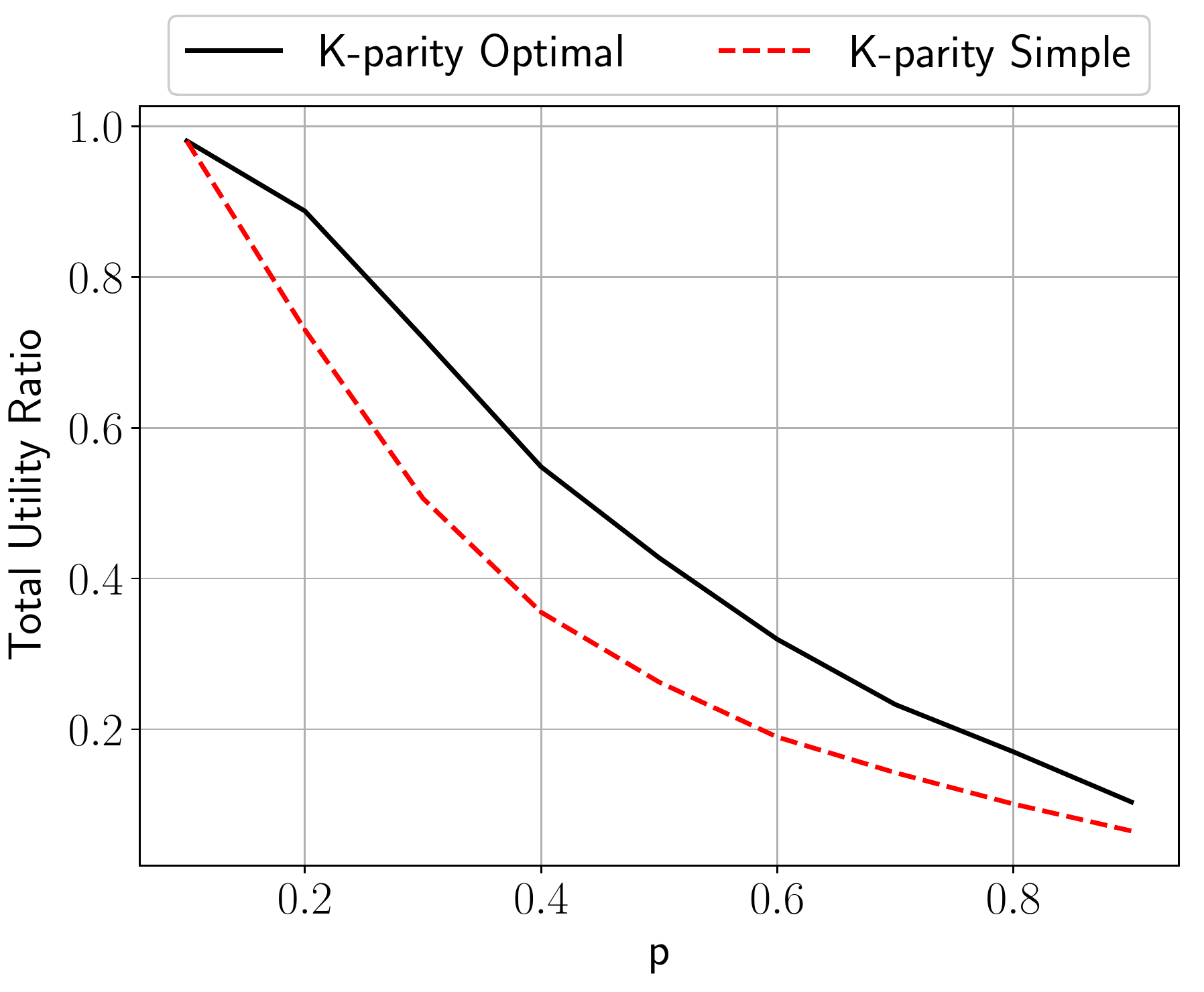}
\caption{Costs for $10$-parity for the optimal strategy and for simple immediate value bidding for \emph{\femaleprice} scenario}
\label{fig:diff_market_values}
\end{subfigure}
\hfill
\begin{subfigure}{.4\linewidth}
\centering
\includegraphics[scale=0.3]{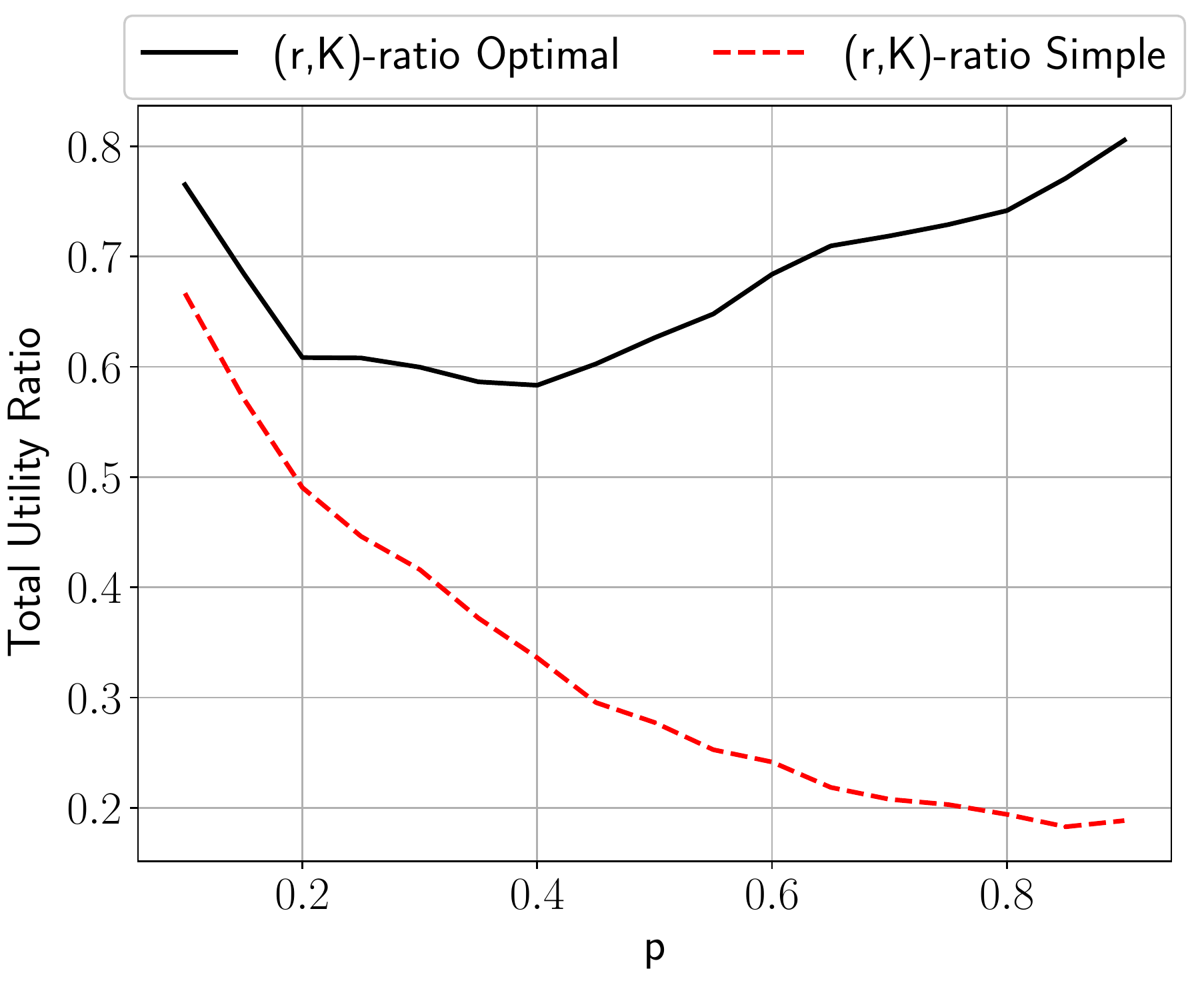}
\caption{Costs for $(0.8,5)$-ratio for the optimal strategy and for simple immediate value bidding for \emph{\femaleprice} scenario}
\label{fig:diff_market_values_ratio}
\end{subfigure}
 
\begin{subfigure}{.32\linewidth}
\centering
\includegraphics[scale=0.3]{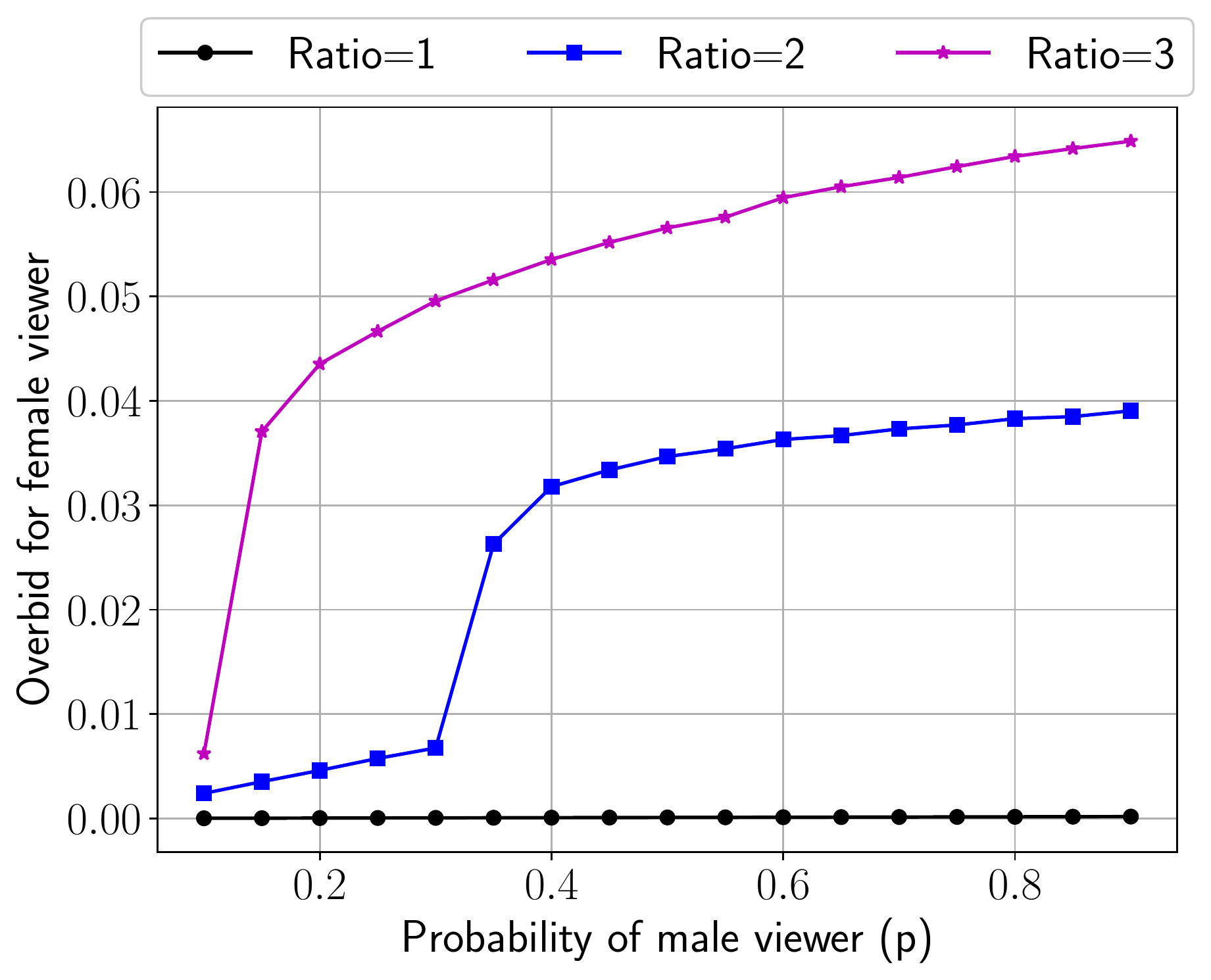}
\caption{Average amount of overbidding for the female viewers for values of $p$}
\label{fig:overbids}
\end{subfigure}
\hfill
\begin{subfigure}{.32\linewidth}
\centering
\includegraphics[scale=0.3]{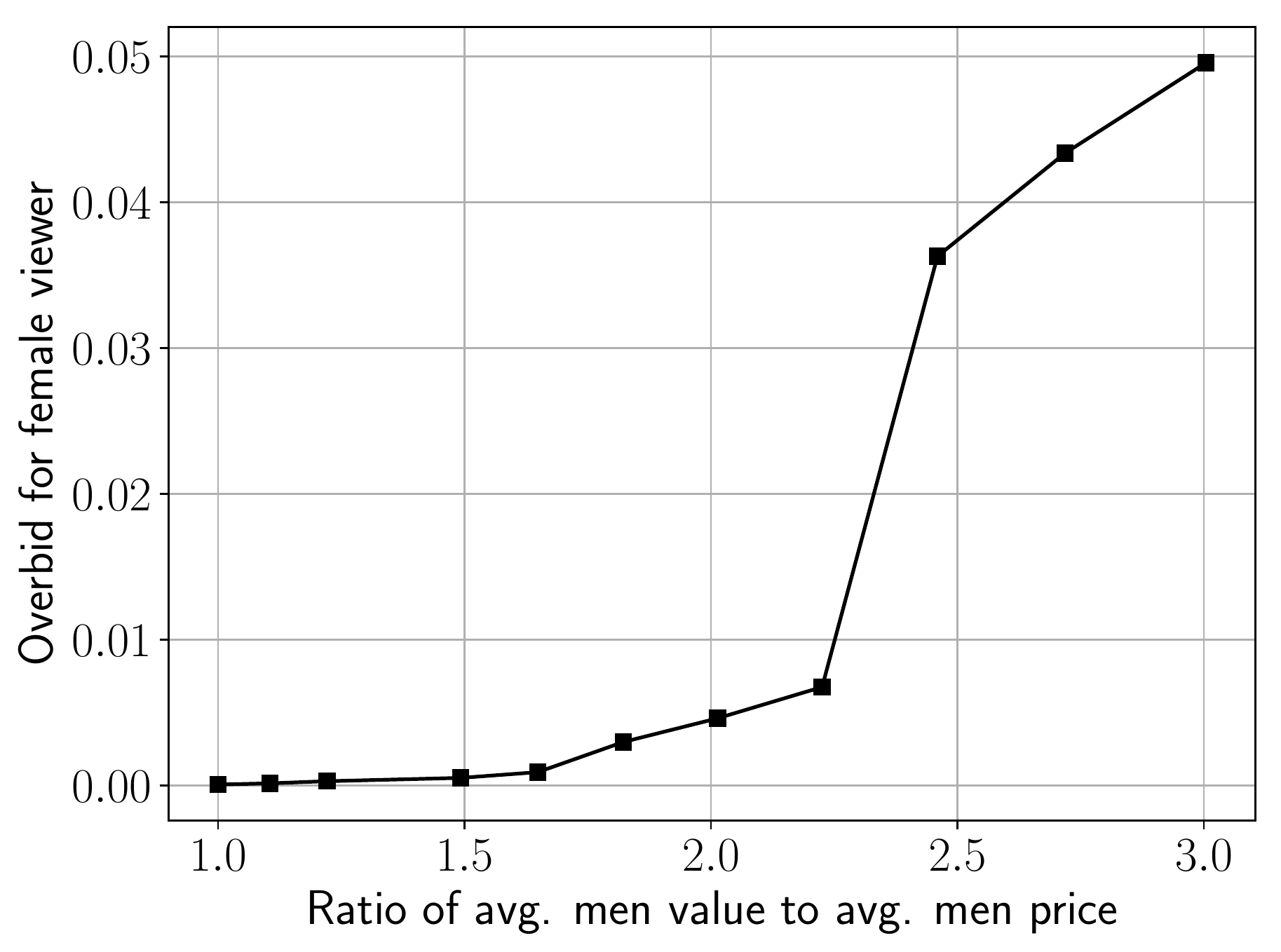}
\caption{Average overbidding for female viewers for ratios of average male viewer value to price of male viewers with $p = 0.5$}
\label{fig:overbids_ratio}
\end{subfigure}
\hfill
\begin{subfigure}{.32\linewidth}
\centering
\includegraphics[scale=0.3]{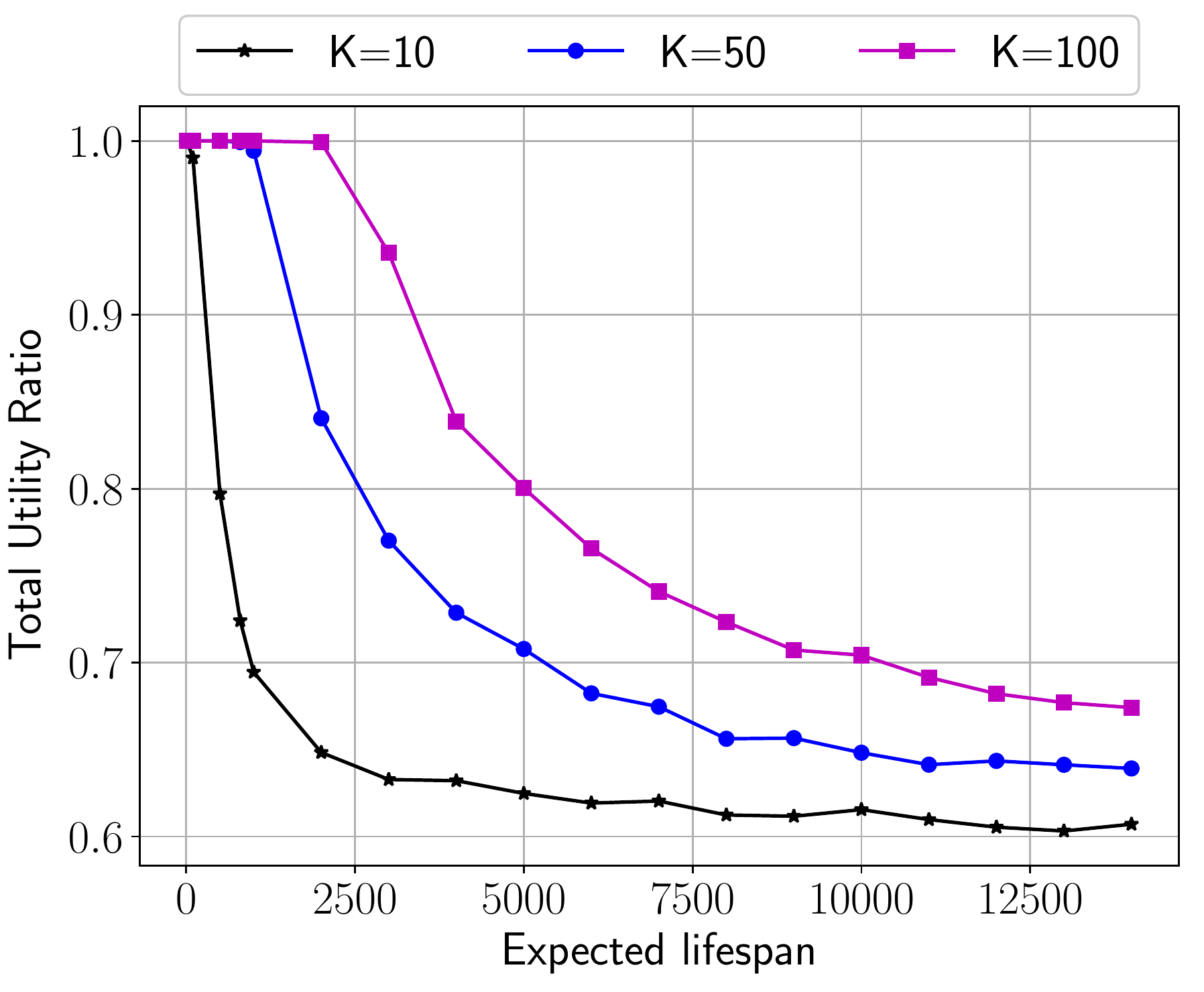}
\caption{Cost of $10$-parity for values of $\delta$ as the expected lifespan of the advertiser in the \emph{\femaleprice} scenario}
\label{fig:lifespan}
\end{subfigure}
\caption{Experimental results for synthetic datasets}
\end{figure*}
To show the effect of assigning different values to men and
women, consider an
advertiser that gives more value to female slots than to male ones,
as shown in the \emph{\femalevalue} parameter settings.
Figure~\ref{fig:diff_p_Kvalues} shows the utility ratio for the
$K$-parity and unrestricted versions of the advertiser in this
scenario.
The $K$-parity version has its maximum utility ratio when
there are more male than female slots.
This may seem counter-intuitive since the advertiser values females
more, but the measured ratio reflects that an
abundance of males means that the $K$-parity version will not have
to operate much differently from the unrestricted one.
This is due to their abundance making overbidding less needed,
decreasing the $K$-parity version's costs.
Lambrecht~et~al.~\citeyearpar{lambrecht18ssrn} empirically showed that
young women are more expensive to show ads to.
To simulate this setting, we considered a scenario in which the other
advertisers prefer females (i.e., $g_j^{\msf{w}} < g_j^{\msf{m}}$ for
all $j \neq i$).
We used the \emph{\femaleprice} parameter settings for this scenario.
As in the first scenario, we have advertiser $i$ value both
types equally, at the average of the two different values used by the
other advertisers.
Figure~\ref{fig:diff_market_values} plots the total utility ratio as
before (solid line).
Note that as women become rare, the $K$-parity version struggles
relative to the unrestricted one since the other advertisers snap up
the few women leaving the constrained version unable to bid for men.
The figure also shows the total utility ratio for a constrained version of
the advertiser $i$ that uses the same simple bidding strategy as the unrestricted
advertisers (dashed line). 
Note that ratio is lower than with our optimal bidding strategy,
showing its value.
This difference comes from our optimal bidding strategy overbidding for the
female viewers, delaying the aforementioned effect. Figure~\ref{fig:diff_market_values_ratio} tells a similar story for
the ratio constraint.

Figures~\ref{fig:overbids} and~\ref{fig:overbids_ratio} further
explore overbidding using a variation on the \emph{\femaleprice}
scenario.
Rather than keep the value that the advertiser $i$ assigns to males
fixed at $\mu_i^{\msf{m}} = -2.5$, we vary this value to see its
effect on overbidding.
Rather than plot $\mu_i^{\msf{m}}$ itself, we plot the ratio of
$\mu_i^{\msf{m}}$ to the value assigned to males by the other
advertisers.
Figure~\ref{fig:overbids} shows this value ratio by using various lines.
For all
such ratios above $1$, as the rate $p$ of male viewers increases, the
optimal $K$-parity advertiser will increase its overbidding on the female
viewers since they are more scared.
Figure~\ref{fig:overbids_ratio} shows that as $\mu_i^{\msf{m}}$
(and, thus, the male value ratio of advertiser $i$ to the other advertisers)
increases, the overbidding for females increases.
Figure~\ref{fig:lifespan} plots the utility ratio as the value of the
rate $\delta$ at which the advertiser $i$ will leave the ad network
changes.
Rather than plot $\delta$ directly, it plots the expected lifespan of
the advertiser computed from $\delta$.
It shows that for short lived advertisers, $K$-parity has no effect
since the advertiser is unlikely to reach $K$ wins for either gender.
However, the constraint rapidly has an effect as the advertiser lives
long enough to win this number of slots.
 
\paragraph{Ad Exchange Revenue}
Also important is how our strategy impacts the revenue of the ad exchange.
We explored the ratio of the ad exchange's revenue when there is one restricted advertiser for each ad slot auction to the case where all advertisers are unrestricted for all of our scenarios ( both real and synthetic dataset). In most cases the ad exchange revenue will not decrease at all. The worst case happens for $(1.0,1)$-ratio constraint advertisers on Yahoo!/@ A1, the ratio of revenues is $0.993$. The ad exchange can have a lower bound on the $K$ and $r$ to make sure it does not lose any  revenue. Therefore, implementing this feature will not significantly reduce the ad exchange's revenue.  Our observations show restricted advertisers are more likely to overbid which increases the ad exchange revenue. Figure~\ref{fig:revenue} we compare the revenue of ad exchange's for different number of restricted advertisers ($\rho$) For Yahoo! A1 dataset. As expected by increasing the number of restricted advertisers the the revenue of ad exchange's increases.

\begin{figure}
 \includegraphics[scale=0.35]{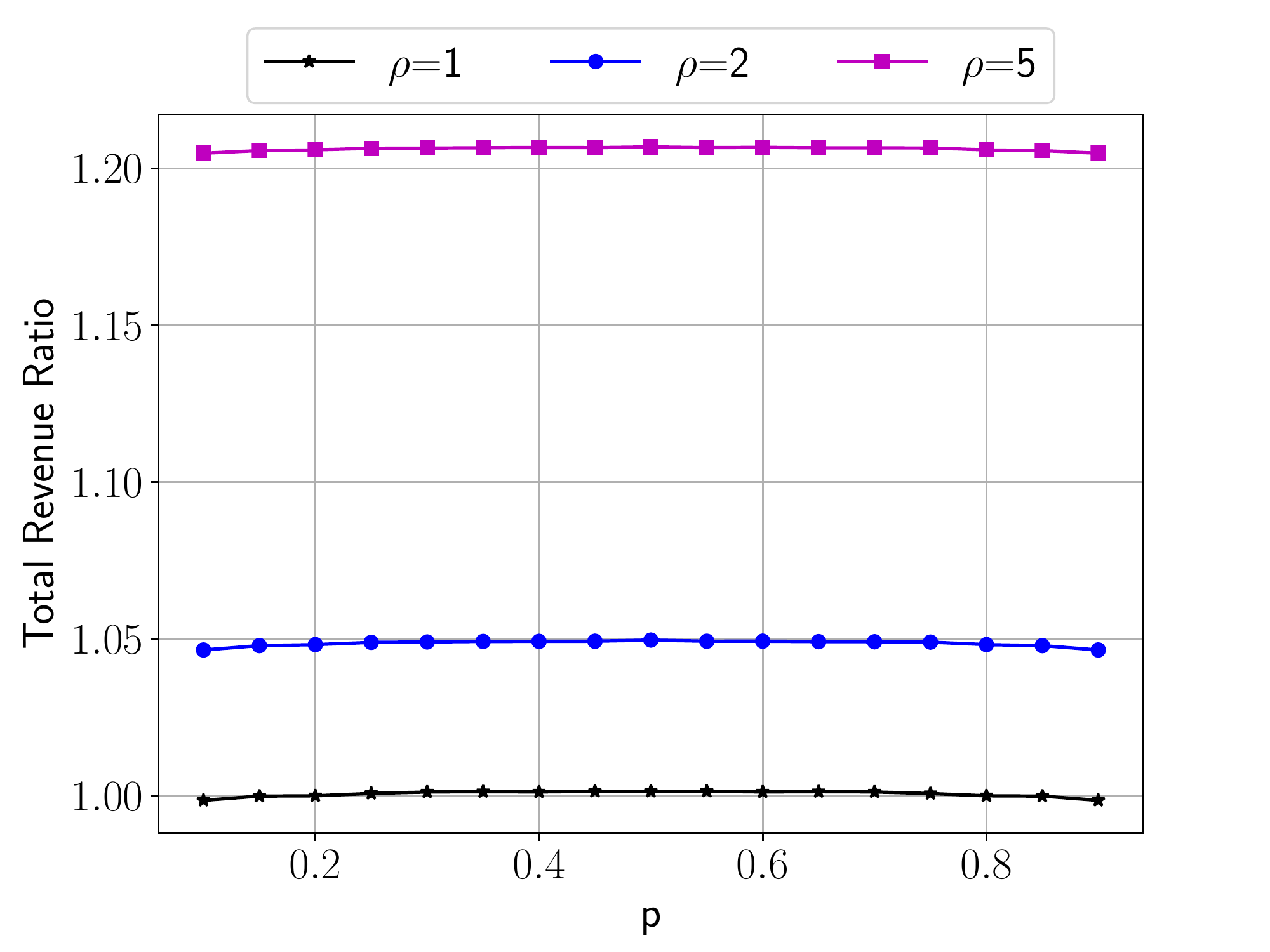}
 \caption{Ratio of the ad plot form revenue for Yahoo A1 dataset with (0.8,5)-ratio restricted advertiser for different number of restricted advertisers.}
 \label{fig:revenue}
\end{figure}
 
\paragraph{Performance}
Algorithms~\ref{alg:converge} and~\ref{alg:converge_ratio}, each of
which only has to run once for each parameter setting, completed in
under 2~minutes and under 10~minutes, respectively.
Calculating bids during auctions, each took the 2~microseconds.
We used a 2013 MacBook Pro with
a 2.3\,GHz Intel Core~i7 and 16\,GB of 1600\,MHz DDR3 memory.

\section{Conclusion and Discussion}
\label{sec:discussion}

Adding parity constraints results in a surprisingly complex bidding
problem, exhibiting both over- and underbidding relative to the
advertiser's immediate value of an ad slot.
Despite this complexity, we show a practical way of computing optimal
bids, to within a small approximation factor $\epsilon$.
This enables us to characterize how the cost of parity depends upon
not just its level of strictness $K$ or $R$, but also the base rate $p$ of
types, their relative values to both the governed advertiser $i$ and
to other advertisers, and the lifespan (or discounting factor)
$\delta$, in sometimes counter-intuitive ways.

We envision two ways in which advertisers could use our bidding strategy.
Firstly, ad exchanges might implement it for them as a
feature in the ad buying interface.
Such exchanges could use the data it has to determine the demographics
of individuals viewing ad slots and adjust bids accordingly.
While this would require a change to the ad exchange, it would not
require modifying the core auction mechanism, making it a more
straightforward feature to add.

Secondly, the strategy could be used either directly by the advertiser
or offered to them by demand-side platforms as a feature.
This approach has the advantage of not requiring any changes to the ad
exchange.
It has the disadvantage of only working for ad exchanges that support
real-time bidding and programmatic advertising with rich enough data
to infer the group membership of the people viewing ad slots.
Additionally, such rich data can pose privacy concerns.

We believe that either of these approaches to deployment would be more
straightforward than any way of deploying an auction mechanism that
enforces parity constraints~\cite{celis19arxiv} or
Guaranteed ad Delivery
(GD)~\cite{salomatin2012unified,turner2012planning}.
Only the ad exchange would be able to implement such functionality.
Presumably, ad exchanges have already selected the auction mechanism
that they believe would be best for their business and would be
reluctant to change it in a way that could have wide ranging effects.
Given that Google uses a generalization of second-price
auctions~\cite{google18adsense}, it may believe that the theoretical
result that second-price auctions are uniquely optimal in
certain settings has some bearing on its setting.
Thus, it may believe that any change to its auction mechanism is
likely to reduce its profits, a strong disincentive.
We believe that ad exchanges would be more willing to implement a
change that instead only alters the bids of advertisers who opt in
since it would be equivalent to one that advertisers could already
implement unilaterally by altering their bids.
Furthermore, since our approach changes just opted-in
advertisers' bids, there is a sense in which they pay for it.

Future work can explore more complex forms of nondiscrimination
constraints, such as ones holding probabilistically or
asymptotically.
The use of bonuses for ad clicks and online tracking to assign
different expected values to individual ad slots could be
considered.
Future work could accommodate constraints for non-binary sensitive
attributes, such as location (a proxy for race, which is
apparently not explicitly tracked by any ad exchange)
or for multiple constraints simultaneously.
Although our MDPs can straightforwardly be extended to such cases
using a cross-product-like construction, the MDP size will be
exponential in the number of constraints and their values,
motivating more significant future work.

The constraints we explore are very strict in that
they must hold at all times, as opposed to holding with high
probability or asymptotically, which might be acceptable in some settings.
In related problems, parity may only be required at the end of certain
checkpoints, such as at the end of a hiring season.
Exploring such relaxations can be future work.

We used a simple model in which the expected value of each female
slot is equal to the others, and the expected value of each male slot
is equal to the others.
Advertisers can use online tracking, machine learning, and other
techniques to compute more fine-grained estimations of slot
values.
Furthermore, our model of ad exchanges does not include that they are
often paid more when the viewer clicks on the ad.
Thus, their expected value for selling an slot to an advertiser
depends upon not just the bid prices but also the fits of the ads for
the slot, which also can be estimated with online tracking and
machine learning.
Such tracking and machine learning can be another route to
discrimination~\cite{datta18fatstar}.



\paragraph*{Acknowledgements}
Milad Nasr is supported by a Google PhD Fellowship
in Security and Privacy. We gratefully acknowledge funding support from
the National Science Foundation
(Grant 1237265).
The opinions in this paper are those of the authors and do not
necessarily reflect the opinions of any funding sponsor or the United
States Government.

\bibliographystyle{named}
\bibliography{bibliography}

\end{document}